\newcommand{\Orb}{\mbox{Orb}}
\newcommand{\Orbs}{\mbox{Orbs}}
\newcommand{\Ex}{\mathsf{Ex}}
\newcommand{\permplus}{+}
\newcommand{\dom}{\mathrm{dom}}
\newcommand{\codom}{\mathrm{codom}}
\newcommand{\pal}{\mbox{pal}}
\newcommand{\restr}[2]{ {#1}{\upharpoonright_{#2}} }
\newcommand{\frestr}[3]{{{#1}\!\restriction}^{#2}_{#3}}
\newcommand{\perm}[1]{\mathfrak{S}({#1})}
\newcommand{\cat}[1]{\mathsf{#1}}
\newcommand{\st}{~,~}
\newcommand{\set}[2]{\{#1~|~#2\}}
\newcommand{\NN}{\mathbb{N}}
\newcommand{\glabel}{\mathcal{L}}
\newcommand{\compsymb}{\leftrightsquigarrow}
\newcommand{\piledperp}[1]{\stackrel{#1}{\compsymb}}
\newcommand{\piledpperp}[1]{\piledperp{#1}_0}
\newcommand{\cexec}[1]{\piledpperp{#1}}
\newcommand{\cexece}[1]{\piledperp{#1}}
\newcommand{\glue}[1]{\piledperp{#1}}
\newcommand{\cglue}[0]{\compsymb}
\newcommand{\ACglue}[0]{\leftrightarrow}
\newcommand{\cord}[2]{\rho(#1,#2)}
\newcommand{\context}[2]{{#1}^{#2}}
\newcommand{\icontains}{\subset}
\newcommand{\ccglue}[4]{\context{#1}{#2} \cglue \context{#3}{#4}}
\newcommand{\point}[1]{\stackrel{\bullet}{#1}}
\newcommand{\exc}[1]{\Ex(#1)}
\newcommand{\exequiv}{\stackrel{\leftrightsquigarrow}{\sim}}
\newcommand{\dglue}{\permplus}
\newcommand{\isoarrow}{\twoheadrightarrow}
\newcommand{\cate}[1]{\mathsf{#1}}
\newcommand{\IN}{\cate{IN}}
\newcommand{\xinjarrow}{\xhookrightarrow}
\newcommand{\xisoarrow}[1]{\xrightarrow[\sim]{#1}}
\newcommand{\bij}[1]{\widetilde{#1}}
\newenvironment{ctp}{\begin{center}\begin{tikzpicture}}
{\end{tikzpicture}\end{center}}
\newenvironment{itp}{\begin{tikzpicture}[baseline=3pt]}
    {\end{tikzpicture}}
\newcommand{\includetikzinline}[1]{%
    \raisebox{-.4\height}{\includegraphics{#1.pdf}}
}
\newcommand{\includetikzcentered}[1]{%
\begin{center}\includegraphics{#1.pdf}\end{center}
}
\newcommand{\includetikz}[1]{%
    \includegraphics{#1.pdf}
}
\def\doi{6 (4:6) 2010}
\begin{document}
\title{An Explicit Framework for Interaction Nets}
\author[M.~de Falco]{Marc de Falco}
\address{e-on software, 68 avenue Parmentier, 75011 Paris}
%\address{Institut de Mathématiques de Luminy\\Université de la Méditerranée\\
%Marseille, France}
\email{marc@de-falco.fr}
\thanks{This work is supported by the French ANR project \textbf{CHoCo}
(ANR-07-BLAN-0324)}
\keywords{linear logic, interaction nets, geometry of 
interaction, graph rewriting}
\subjclass{F.1.1}

\maketitle
\begin{abstract}
Interaction nets are a graphical formalism inspired by Linear
Logic proof-nets often used for studying higher order rewriting
e.g. $\beta$-reduction. Traditional presentations of interaction nets
are based on graph theory and rely on elementary  properties of graph
theory. We give here a more explicit presentation based on notions borrowed 
from  Girard's \emph{Geometry of Interaction}: interaction nets
are presented as partial permutations and a composition of nets, 
the \emph{gluing}, is derived from the execution formula.
We then define contexts and reduction as the context closure of
rules. 
We prove strong confluence of the reduction within our framework
and show how interaction nets can be viewed as the quotient of some 
generalized proof-nets.
\end{abstract}

\section{Introduction}
Interaction nets were introduced by Yves Lafont in~\cite{Lafont90} as a way
to extract a model of computation from the well-behaved proof-nets
of multiplicative linear logic. They have since been widely used as a
formalism for the implementation of reduction strategies for
the $\lambda$-calculus, providing a pictorial~\footnote{By putting
a visual emphasis on occurrences of a variable, interaction nets allow
a formal reasoning while not being as cumbersome as indices.}
 way to do explicit substitution~\cite{Mackie98a}\cite{Mackie98b}\cite{Lippi03}
and implement optimal reduction~\cite{AbadiGonthierLevy92b}.

Interaction nets are easy to present: a net is made of cells
\includetikzcentered{intro_cell}
with a fixed number of connection ports, depicted as big dots
on the picture, one of which is distinguished and called the principal port
of the cell, and of free ports, 
and of wires between those ports such that any port is
linked by exactly one wire. Then we define reduction on nets by giving rules 
of the form
$$
\includetikzinline{redex_repr}
\rightarrow
\includetikzinline{reductum}
$$
where the two cells in the left part are linked by their principal ports
and the box in the right part is a net with the same free ports as the
left part. Such a rule can be turned into a reduction of nets: as soon as a
net contains the left part we replace it with the right part.

Even though this definition is sufficient to work with interaction nets, 
it is too limited to reason on things like paths or observational equivalence.
One of the main
issues comes from the fact that we do not really know what a net is. 
The situation is quite similar for graphs: it is the author belief that
we cannot study them relying on drawings only  without being deceived
by our intuition. Thus, we are inclined to
give a precise definition of a graph as a binary relation or as a set of edges. 

The main issue to give such a definition for interaction nets is that it
should cope with reduction.
As an example consider a graph-like construction over ports and a rule
$$
\includetikzinline{ruleintroleft}
\rightarrow
\includetikzinline{ruleintroright}
$$
Can it be applied to the interaction net
\includetikzinline{introneta}
?
If we are rigorous the left part of the rule is not exactly contained in this
net as 
\includetikzinline{intronetb}
is not contained in 
\includetikzinline{intronetc}
.
Perhaps we could consider this last wire as composed of three smaller ones and 
two temporary ports like in
\includetikzinline{intronetd}
and the whole net after reduction would be
\includetikzinline{intronete}
. But then, to get back a real interaction nets we would have
to concatenate all those wires and erase the temporary ports, which would
give us the net
\includetikzinline{intronetf}
.
We will refer to this process of wire concatenation as \emph{port fusion}.

There are many works giving definitions of interaction nets 
giving a rigorous description of reduction.
Nevertheless, they all share a common point: they deal
either implicitly or externally with port fusion. In the seminal 
article~\cite{Lafont90} a definition of nets as terms with paired variables is 
given, it is further refined in~\cite{FernandezMackie99}. In this framework 
an equivalence relation on variables deals with port fusion. 
In~\cite{Pinto00} a concrete machine is given where the computation of
the equivalence relation is broken into many steps.
A rigorous approach sharing some tools with ours is given in~\cite{Vaux07}, 
port fusion is done there by an external port rewriting algorithm.

Therefore, we raise the following question: can we give a definition of 
interaction nets allowing a simple and rigorous description of reduction
encompassing port fusion, and upon which we can prove results like 
strong confluence?
This is the aim of this paper.

Our proposition is based on
the following observation. When we plug the right part of a rule in a net,
new wires are defined based on a back and forth process between the original
net and this right part. Such kind of interaction is key to the 
\emph{geometry of interaction (GoI)}~\cite{Girard89} or 
\emph{game semantics}~\cite{AbramskyJagadeesanMalacaria94,HylandOng94}. 
The untyped nature
of interaction nets makes the former a possible way to express them. 
To be able to do so we need to express an interaction net as some kind
of partial permutation and use a composition based on the so-called
\emph{execution formula}. Such a presentation of multiplicative proof-nets 
has been made by Jean-Yves Girard in~\cite{Girard87c}.
If we try to think about the fundamental actions
one needs to be able to do on interaction nets, it is quite clear that
we can distinguish a \emph{wire action} consisting in going from one port
to another along a wire and the \emph{cell action} consisting in going 
from one cell port to another inside the same cell. Those two actions
lead to the description of a net as a pair of permutations.
One might ask whether it is possible in some case to faithfully
combine this pair in only one permutation, a solution to this question
is what one could call a GoI.

The issue of port fusion is not inherent to interaction nets and can
be found in other related frameworks. 
Diagram rewriting~\cite{Lafont03} uses a compact-closed underlying category 
allowing mathematically the \emph{straightening} of wires. 
There are strong links between this approach and ours, for example
the characterization of the free compact-closed category over a category
given in~\cite{KellyLaplaza80} shares a lot of common techniques with
our approach. It is not surprising to find such link as 
compact-closed categories are unavoidable when dealing with
\emph{geometry of interaction}. Indeed they provide -- through the Int construction 
of Joyal, Street and Verity~\cite{JoyalStreetVerity96} -- the categorical 
framework to interpret GoI~\cite{AbramskyJagadeesan92,HaghverdiScott06}.
What is different in our work, is that we
stay at a syntactical level, thus, providing a rigorous syntax
for writing and reducing programs.

\iffalse
Another work related to this problem is the presentation of multiplicative 
proof-nets by Hughes  in~\cite{Hughes05} where the author presents proof-nets 
as functions with a composition based on a categorical construction associated
to traced monoidal categories~\cite{JoyalStreetVerity96} which has been
used to analyse Girard's 
\emph{geometry of interaction}~\cite{AbramskyJagadeesan92,HaghverdiScott06}.
A large part of our framework could be seen as a special case of a similar 
general categorical construction. Indeed, we are using the same tool as in
those semantics, but our specialization to the partial injections of integers
allows us to work on syntax and to stay in a completely untyped world.
\fi

This paper is organized as follows. In Section 2 we present the mathematical
tools that we are going to use. In Section 3 we define the statics of interaction
nets, in Section 4 basic tools for handling them and in Section 5 we present their
dynamics. In Section 6 we draw explicit links between interaction nets and
proof-nets. In Section 7 we present a categorical double-pushout approach to
net rewriting. In Section 8 we briefly discuss implementation of the previous 
definitions.

\section{Permutations and partial injections}
We give here the main definitions and constructions that are going to be 
central to our realization of interaction nets. Those definitions are standard
in the partial injections model of 
\emph{geometry of interaction}~\cite{Girard87c,DanosRegnier95} or in the 
definition of the traced monoidal category 
$\mathsf{PInj}$~\cite{HaghverdiScott06}.

\subsection{Permutations}
We recall that a permutation of a set $E$ is any bijection acting on $E$ and
we write $\perm{E}$ for the set of these permutations. When $E$ is finite, which we
will assume from here, for a given 
$\sigma\in\perm{E}$ we call \emph{order} the least integer $n$ such that
$\sigma^n = id_E$, for $x\in E$ we write $\Orb_\sigma(x)=\set{\sigma^i(x)}
{i\in\NN}$ and we call it the \emph{orbit} of $x$, we write $\Orbs(\sigma)$ 
for the orbits of $\sigma$. If $o$ is an orbit we write $|o|$ for its size.

We write $(c_1,\dots,c_n)$ for
the permutation sending $c_i$ to $c_{i+1}$, for $i<n$,
$c_n$ to $c_1$ and being the identity elsewhere, we call it a \emph{cycle} of 
\emph{length} $n$ which is also its order. Any permutation is a compound of 
disjoint cycles.

Let $\sigma$ be a permutation of $E$ and $\glabel$ any set, we say that 
$\sigma$ 
is labelled by $\glabel$ if we have a function $l_\sigma : \Orbs(\sigma)
\rightarrow \glabel$. We say that $\sigma$ \emph{has pointed orbits} if it is
labelled by $E$ and $\forall o\in\Orbs(\sigma)$ we have $l_\sigma(o)\in o$. 
Remark that an orbit is a sub-cycle and thus, having pointed orbits means that 
we have chosen a starting point in those sub-cycles.

\subsection{Partial injections}
A \emph{partial injection (of integers)} $f$ is a bijection from a subset
$\dom(f)$ of $\NN$, called its \emph{domain}, to a subset $\codom(f)$ of
$\NN$, called its \emph{codomain}.
We write $f : A \isoarrow B$ to say that $f$ is any partial injection such
that $\dom(f) = A$ and $\codom(f) = B$.
We write $f^\star$ for the inverse of this bijection viewed as a 
partial injection.

We call \emph{partial permutation} a partial injection $f$ such that 
$\dom(f) = \codom(f)$.

\subsection{Execution}
Let $f$ be a partial injection and $E',F'\subseteq\NN$.
We write $\frestr{f}{E'}{F'}$ for the partial injection of domain
$\set{x\in E'\cap\dom{f}}{f(x)\in F'}$ and such that
$\frestr{f}{E'}{F'}(x) = f(x)$ where it is defined.
We have
$$\frestr{f}{E'}{F'} : f^{-1}(F') \cap E' \isoarrow 
f(E') \cap F'$$
If $E'=F'$ we write $\restr{f}{E'} = \frestr{f}{E'}{E'}$.

When $\dom(f) \cap \dom(g) = \emptyset$ and $\codom(f) \cap \codom(g) = 
\emptyset$, we say that $f$ and $g$ are \emph{disjoint} and we define the
sum $f \permplus g$ and the associated refining order $\prec$ as expected.
We have $\dom(f \permplus g) = \dom(f) \uplus \dom(g)$ where $\uplus$ is
the disjoint union.

\begin{prop}
\label{fact:exec_def}
Let $f : A \uplus B \isoarrow C \uplus D$ and $g : D \isoarrow B$ a 
situation depicted by the following diagram
\includetikzinline{diagramexecdef}
.

\begin{enumerate}[\em i)]
    \item 
For all $n\in\NN$, the partial injection from $A$ to $C$
$$\Ex_n(f,g) = 
\frestr{f}{A}{C} \permplus \frestr{(f g f)}{A}{C} \permplus \cdots \permplus 
\frestr{(f (g f)^n)}{A}{C}$$
is well defined.
\item
$(\Ex_n(f,g))_{n\in\NN}$ is an increasing sequence of partial injections 
with respect to $\prec$, whose limit, the increasing union, is noted $\Ex(f,g)$.
\item 
If $\dom(f)$ is finite the sequence $(\Ex_n(f,g))_n$ is stationary and
$$\Ex(f,g) : A \isoarrow C$$
\end{enumerate}
\end{prop}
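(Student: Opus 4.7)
The guiding picture is that $f$ and $g$ together define a ``flow'' on $A \uplus B \uplus C \uplus D$: from $\dom(f) = A \uplus B$ one jumps by $f$, and from $D$ one jumps back into $B$ by $g$. A trajectory starting at some $x \in A$ alternates $f, g, f, g, \ldots$, either terminating when an application of $f$ lands in $C$ or continuing indefinitely through $D$ and $B$. The $k$-th summand $\frestr{f(gf)^k}{A}{C}$ should record exactly those $x \in A$ whose trajectory terminates on the $(k+1)$-st application of $f$, and I would organise the whole argument around this picture.

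For (i), I would first show that the domains of the summands are pairwise disjoint: if $x$ lay in both $\dom(\frestr{f(gf)^j}{A}{C})$ and $\dom(\frestr{f(gf)^k}{A}{C})$ with $j < k$, then $f((gf)^j(x))$ would have to lie simultaneously in $C$ (by the first condition) and in $D$ (so that the trajectory may continue another $k-j$ rounds of $gf$), contradicting $C \cap D = \emptyset$. Codomain disjointness follows from the same idea: if $f((gf)^j(x)) = f((gf)^k(x'))$ with $j < k$, repeated use of the injectivity of $f$ and $g$ unwinds the equality one step at a time to $x = (gf)^{k-j}(x')$, placing $x$ in $B$ and contradicting $A \cap B = \emptyset$. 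Statement (ii) would then be immediate from the recursion $\Ex_{n+1}(f,g) = \Ex_n(f,g) \permplus \frestr{f(gf)^{n+1}}{A}{C}$, which gives $\Ex_n(f,g) \prec \Ex_{n+1}(f,g)$ and lets the union $\Ex(f,g)$ be formed.

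For (iii), finiteness of $\dom(f)$ makes all four of $A, B, C, D$ finite, and the core claim I would prove is that every trajectory starting at $x \in A$ terminates in $C$: otherwise it would visit infinitely many states of the finite set $(A \uplus B) \cup (C \uplus D)$, so by pigeonhole revisit one, and injectivity of $f$ and $g$ would then force a cycle through $x$; but $x \in A$ has no predecessor in the flow (since $A$ is disjoint from $\codom(f) = C \uplus D$ and from $\codom(g) = B$), ruling out such a cycle. Hence each $x \in A$ belongs to $\dom(\frestr{f(gf)^{k(x)}}{A}{C})$ for some $k(x)$, and since $A$ is finite the sequence $(\Ex_n(f,g))$ is stationary with $\dom(\Ex(f,g)) = A$; a symmetric backward-flow argument, tracing $y \in C$ via $f^{-1}$ and $g^{-1}$, yields $\codom(\Ex(f,g)) = C$. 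The main obstacle, such as it is, is the bookkeeping of which of $A, B, C, D$ each intermediate $(gf)^i(x)$ lives in at each step; once this is made precise, both the disjointness in (i) and the termination in (iii) reduce to injectivity of $f, g$ combined with elementary set-disjointness and pigeonhole, and no deeper difficulty appears.
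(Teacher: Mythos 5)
Your proof is correct and follows essentially the same route as the paper: disjointness of the summands via unwinding injectivity to contradict $A\cap B=\emptyset$ (or $C\cap D=\emptyset$), monotonicity from the sum decomposition, and a pigeonhole argument on the finite set $D$ to show every trajectory from $A$ terminates in $C$. The only (cosmetic) divergence is at the very end, where you establish $\codom(\Ex(f,g))=C$ by running the flow backwards from $C$, while the paper invokes a cardinality argument ($|A|=|C|$ since $f$ and $g$ are bijections); both are fine.
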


Fig.~\ref{fig:perm_exec} gives a graphical presentation of execution.

\begin{figure}
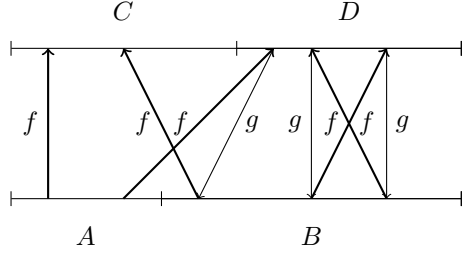

\centering
\includetikz{reprexec}
\caption{\label{fig:perm_exec}Representation of $\Ex(f,g)$ with 
the notations of proposition~\ref{fact:exec_def}}
\end{figure}

\proof\hfill
\begin{enumerate}[i)]
\item To assert the validity of the sum all we have to have show is that
$\forall i \neq j \in \NN$ :
\begin{eqnarray*}
(f (g f)^i)(A) \cap (f (g f)^j)(A) \cap C & = & \emptyset \\
(f (g f)^i)^{-1}(C) \cap  (f (g f)^j)^{-1}(C) \cap A & = & 
\emptyset
\end{eqnarray*}

Suppose there is an
$x \in (f (g f)^i)(A) \cap (f (g f)^j)(A) \cap C$,
we set $y$ and $z \in A$ such that $x = f(g f)^i(y) = f(g f)^j(z)$.
We can further suppose that $i < j$, and we have 
$y = (g f)^{j-i}(z) \in B$, 
which is contradictory as $y \in A$ and $A \cap B = \emptyset$.

The other equality is proved in the same way.

\item Let $n \le m\in\NN$ and $x\in\dom(\Ex_n(f,g))$, by definition
of the sum there exists a unique $k$ such that 
$\Ex_n(f,g)(x) = (f(g f)^k) (x)$. But then $x\in\dom(\Ex_m(f,g))$ and
the uniqueness of $k$ asserts that $\Ex_m(f,g)(x) = (f(g f)^k)(x)$. Thus,
$\Ex_m(f,g)$ is a refinement of $\Ex_n(f,g)$.

\item Suppose there is a $x \in A - \dom(\Ex(f,g))$, then we
should have for all $k$, $(f (g f)^k)(x)\in D$ or else $\Ex(f,g)(x)$
would be defined. But $D$ being finite, there exists $n\le m$ such 
that $(f (g f)^n)(x) = (f (g f)^m)(x)$ and we get 
$x = (g f)^{m-n}(x) \in B$ which is contradictory. A simple argument
on cardinal show then that $\codom(\Ex(f,g)) = C$.\qed
\end{enumerate}

\begin{thm}[Associativity of execution]
\label{theoreme:assoc_exec} \ \\
Let 
\includetikzinline{diagramassocexec}
be three partial injections.
We have $\forall n\in \mathbb{N}$
$$\Ex_n(\Ex_n(f, g), h) 
    = \Ex_n(f, g \permplus h) 
    = \Ex_n(\Ex_n(f, h), g)$$
and thus
$$\Ex(\Ex(f, g), h) 
    = \Ex(f, g \permplus h) 
    = \Ex(\Ex(f, h), g)$$
\end{thm}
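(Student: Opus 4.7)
The plan is to prove both equalities by unpacking the definitions and observing that the two sides describe the same alternating paths, merely grouped differently. By the symmetry of the statement in $g$ and $h$, it suffices to establish $\Ex_n(f, g \permplus h) = \Ex_n(\Ex_n(f,g), h)$; the second equality follows by swapping the roles of $g$ and $h$.

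The setup indicated by the diagram is $f : A \uplus B \uplus E \isoarrow C \uplus D \uplus F$ with $g : D \isoarrow B$ and $h : F \isoarrow E$, so that $g \permplus h$ is well defined and $\Ex(f, g) : A \uplus E \isoarrow C \uplus F$ is well defined by Proposition~\ref{fact:exec_def}. The key observation is that in an iteration $f \cdot ((g \permplus h) f)^k(x)$ starting from $x \in A$, the choice between $g$ and $h$ at each intermediate step is \emph{forced}: one applies $g$ exactly when the preceding $f$-output lies in $D$, and $h$ exactly when it lies in $F$. Consequently the sequence of elements produced is identical whether one iterates $f$ against $g \permplus h$ directly, or one groups the iteration into blocks terminated at each $h$-application and identifies each block with a single application of $\Ex(f,g)$.

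Concretely I would prove both inclusions. For a path $f((g \permplus h) f)^K(x) \in C$ witnessing membership in $\Ex(f, g \permplus h)$, I split the sequence at each $h$-application. Each maximal block between two $h$-steps, as well as the initial and final blocks, has the form $f(gf)^{k_t}$ starting in $A \uplus E$ and ending in $C \uplus F$, which is precisely one application of $\Ex(f,g)$. Regrouping yields an expression $\Ex(f,g)(h\, \Ex(f,g))^i(x) \in C$ exhibiting membership in $\Ex(\Ex(f,g), h)$. Conversely, expanding each $\Ex(f,g)$ in a path of $\Ex(\Ex(f,g), h)$ into its constituent $f(gf)^{k_t}$ recovers a single sequence of the form $f((g \permplus h) f)^K(x)$ belonging to $\Ex(f, g \permplus h)$.

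For the indexed statement I would proceed by induction on $n$, with base case $n = 0$ collapsing all three expressions to $\frestr{f}{A}{C}$. The main obstacle is the index bookkeeping: an outer iteration on the left counts one $h$-step plus an arbitrary number of internal $g$-steps, whereas on the right the index counts all $g$- and $h$-steps uniformly, so aligning the partial sums at a fixed $n$ requires careful use of the refining order $\prec$. Once the finite-level equalities are secured, the unbounded version is obtained by passing to the monotone union, which commutes with the restrictions $\frestr{\cdot}{A}{C}$ and with composition on finitely supported elements.
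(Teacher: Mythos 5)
Your core argument---writing a witness $f((g\permplus h)f)^m(p)$, splitting it at the $h$-applications into maximal blocks $f(gf)^{i_t}$, and regrouping each block as one application of $\Ex(f,g)$---is exactly the paper's proof; the converse regrouping you add (re-expanding each $\Ex(f,g)$-block into a single alternating path) is the direction the paper leaves implicit, and together the two regroupings give the limit equality $\Ex(\Ex(f,g),h)=\Ex(f,g\permplus h)$ cleanly.

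The place where your plan wobbles is the finite-level claim, and the ``index bookkeeping'' obstacle you flag is genuine rather than cosmetic. From $i_1+\cdots+i_k+k-1=m\le n$ one gets every $i_t\le n$ and $k-1\le n$, so the first regrouping yields the refinement $\Ex_n(f,g\permplus h)\prec\Ex_n(\Ex_n(f,g),h)$. But the reverse expansion of an element of $\Ex_n(\Ex_n(f,g),h)$ may require up to $(n+1)n+n$ iterations of $f$ against $g\permplus h$, so it only certifies membership in $\Ex_{n^2+2n}(f,g\permplus h)$, not in $\Ex_n(f,g\permplus h)$. Indeed the two sides can differ at fixed $n$: with a path
$$a\xrightarrow{f}d_1\xrightarrow{g}b_1\xrightarrow{f}f_0\xrightarrow{h}e\xrightarrow{f}d_2\xrightarrow{g}b_2\xrightarrow{f}c$$
one has $a\in\dom(\Ex_1(\Ex_1(f,g),h))$ but $a\notin\dom(\Ex_1(f,g\permplus h))$. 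Consequently your proposed induction on $n$ cannot establish the displayed equality at level $n$, because only the one-way refinement holds there; no amount of care with $\prec$ will close the other direction. Note that this is an overstatement in the paper itself: its own computation also proves only the containment $\Ex_n(f,g\permplus h)\prec\Ex_n(\Ex_n(f,g),h)$ before asserting equality. None of this affects the limit statement, which is the one used in the rest of the paper and which your two regroupings do establish; for the finite-level part you should either weaken the claim to the refinement or drop it.
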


\begin{proof}
Let $p \in \dom(\Ex_n(f,g\permplus h))$, there exists
$m\le n\in\mathbb{N}$ such that
\begin{eqnarray*}
\Ex_n(f,g\permplus h)(p) &=&
f ((g \permplus h) f)^m (p) \\
&=& (f (g f)^{i_1}) h \dots h (f (g f)^{i_k}) (p)
\mbox{ with } i_1 + \dots + i_k + k - 1 = m \\
&=& (\Ex_n(f,g) h \Ex_n(f,g) \dots h \Ex_n(f,g)) (p) \\
&=& (\Ex_n(f,g) (h \Ex_n(f,g))^{k-1}) (p) \\
&=& \Ex_n(\Ex_n(f,g), h)(p)
\end{eqnarray*}
By commutativity of $\permplus$ we get the other equality.
These equalities are directly transmitted to $\Ex$.
\end{proof}

This theorem is of great significance, it is a completely localized version
of Church-Rosser property. Indeed, we will see later that confluence results
are a corollary of this theorem.

The following proposition states that $\Ex$ can always be extended by an independent 
partial injection.
\begin{prop}
    \label{prop:exec sum}
    Let $f,g$ and $h$ be partial injections such that
    \begin{center}\begin{tikzpicture}
        \matrix[matrix of math nodes,column sep=3cm]{
        |(ab)| A \uplus B & |(cd)| C \uplus D \\
        %|(a)| A & \uplus & |(b)| B & & & & & |(c)| C & \uplus & |(d)| D \\
        };
        \draw (ab) edge[->>] node[above]{$f$} (cd);
        \draw (cd) + (0.3,-0.2) edge[bend left,->>] node[above]{$g$} (ab);
    \end{tikzpicture}\end{center}
    and $\dom(h) \cap \dom(f) = \codom(h) \cap \codom(f) = \emptyset$.

    We have $h + \Ex(f,g) = \Ex(f+h,g)$.
\end{prop}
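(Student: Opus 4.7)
The plan is to exploit the explicit series definition of $\Ex$ and show that, term by term, the iterates $(f+h)(g(f+h))^k$ restricted appropriately reduce to the iterates $f(gf)^k$ plus exactly one copy of $h$ at $k=0$.

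First, I would set up notation: let $A' = A \uplus \dom(h)$ and $C' = C \uplus \codom(h)$, so that $f+h : A' \uplus B \isoarrow C' \uplus D$ and $\Ex_n(f+h,g)$ is the sum of the $\frestr{((f+h)(g(f+h))^k)}{A'}{C'}$ for $k=0,\dots,n$. The key observation, derived from $\dom(h)\cap\dom(f)=\codom(h)\cap\codom(f)=\emptyset$, is that $\dom(h)$ is disjoint from $B \subseteq \dom(f)$ and $\codom(h)$ is disjoint from $D \subseteq \codom(f)$; in particular, $\codom(h)$ is disjoint from $\dom(g) = D$ and $\dom(h)$ is disjoint from $\codom(g) = B$.

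Next, I would analyse the trajectory of an element $x \in A'$ under the alternating action. Two cases: if $x \in \dom(h)$, then $(f+h)(x) = h(x) \in \codom(h) \subseteq C'$, and since $\codom(h)\cap D = \emptyset$, the composite $g(f+h)$ is undefined at $x$, so $\frestr{((f+h)(g(f+h))^k)}{\dom(h)}{C'}$ equals $h$ at $k=0$ and is empty for $k\ge 1$. If $x \in A$, then $(f+h)(x) = f(x) \in C\uplus D$; to continue, the image must lie in $D$, after which $g$ sends it to $B$, where again $(f+h)$ agrees with $f$. Iterating, the whole trajectory stays inside $A\cup B\cup C\cup D$, never touching $\dom(h)\cup\codom(h)$, so $(f+h)(g(f+h))^k$ coincides with $f(gf)^k$ on $A$. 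Using that $C' \cap (C\uplus D) = C$ (from $\codom(h)\cap D=\emptyset$), the codomain restriction to $C'$ matches the codomain restriction to $C$.

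Combining these, for every $n$ one obtains
\[
\Ex_n(f+h,g) \;=\; h \;+\; \sum_{k=0}^{n} \frestr{(f(gf)^k)}{A}{C} \;=\; h + \Ex_n(f,g),
\]
where the disjointness of $h$ from each $\frestr{(f(gf)^k)}{A}{C}$ (whose domain and codomain lie in $A$ and $C$) justifies the sum. Taking the limit in $n$ yields the claimed equality $\Ex(f+h,g) = h + \Ex(f,g)$. The only real obstacle is the careful bookkeeping of the disjointness hypotheses to ensure both that no iterate ever re-enters $\dom(h)\cup\codom(h)$ and that the restrictions to $A'$ and $C'$ split cleanly into the $h$-part and the $f,g$-part; once this is set up the identity is immediate from the defining series.
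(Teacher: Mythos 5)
Your proof is correct and rests on the same key fact as the paper's: since $\codom(h)\cap\dom(g)=\dom(h)\cap\codom(g)=\emptyset$, the map $h$ never interacts with $g$, so the iterates of $f+h$ collapse to those of $f$ plus a single copy of $h$ at step zero. The paper compresses your trajectory analysis into the one-line algebraic identity $(f+h)g(f+h)=fgf$ (from $hg=gh=0$), but the argument is essentially identical.
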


\begin{proof}
    This result directly comes from the relation $(f+h)g(f+h) = f g f + h g f + f g h + h g h = f g f$
    as $h g = g h = 0$.
\end{proof}

\subsection{$w$-permutations and \texorpdfstring{$\Ex$}{Ex}-composition}
We call \emph{$w$-permutation} an involutive partial permutation of finite
domain. That means that a $w$-permutation is a product of disjoint cycles of
length at most 2.

Let $\sigma$ and $\tau$ be disjoint $w$-permutations 
and let $f$ be a partial injection with $\dom(f) \subseteq \dom(\sigma)$
and $\codom(f) \subseteq \dom(\tau)$.
We call the \emph{$\Ex_0$-composition} of $\sigma$ and $\tau$ along $f$ the
partial permutation $$\sigma \cexec{f} \tau = 
\Ex(\sigma\permplus\tau,f\permplus f^\star)$$

\begin{figure}
\centering
\includetikz{reprcompexec}
\caption{\label{fig:perm:comp_exec}Representation of the 
$\Ex_0$-composition $\sigma \cexec{f} \tau$}
\end{figure}
Fig.~\ref{fig:perm:comp_exec} gives a representation of this composition. 

\begin{prop}
\label{prop:wperm_closed}
$\sigma\cexec{f}\tau$ is a $w$-permutation.
\end{prop}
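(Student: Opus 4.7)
The plan is to verify the three conditions defining a $w$-permutation for $\sigma \cexec{f} \tau$: same domain and codomain, finite domain, and involutivity. The first two fall out of the setup of the execution formula; the real content is involutivity.

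First, I would identify how Proposition~\ref{fact:exec_def} applies. Set $F = \sigma \permplus \tau$, a partial permutation of $\dom(\sigma) \uplus \dom(\tau)$, and $G = f \permplus f^\star$, a partial permutation of $B := \dom(f) \uplus \codom(f)$. Partition $\dom(\sigma) \uplus \dom(\tau) = A \uplus B$ with $A$ the complement of $B$. Then $F : A \uplus B \isoarrow A \uplus B$ and $G : B \isoarrow B$, so Proposition~\ref{fact:exec_def} (applicable since $\dom(F)$ is finite) delivers $\sigma \cexec{f} \tau = \Ex(F,G) : A \isoarrow A$. This immediately gives a partial permutation with finite domain.

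For involutivity, I would first observe that both $F$ and $G$ are involutions on their respective domains: $F^2 = \sigma^2 \permplus \tau^2 = \id_{\dom(\sigma)\uplus\dom(\tau)}$ since $\sigma$ and $\tau$ are $w$-permutations, and $G^2 = f f^\star \permplus f^\star f = \id_B$. Then for $x \in A$, I would write out the trajectory $x_0 = x$, $x_1 = F(x_0)$, $x_2 = G(x_1)$, $x_3 = F(x_2)$, and so on. By the definition of $\Ex$, $y := \Ex(F,G)(x) = x_{2k+1}$ for the least $k$ such that $x_{2k+1} \in A$; in particular $x_1,\dots,x_{2k} \in B$. Involutivity of $F$ gives $F(y) = F(x_{2k+1}) = x_{2k}$, then $G(x_{2k}) = x_{2k-1}$, and inductively the trajectory starting from $y$ is exactly $(x_{2k+1}, x_{2k}, \dots, x_1, x_0)$; the first value to land in $A$ is $x_0 = x$ at step $2k+1$, so $\Ex(F,G)(y) = x$.

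The only real obstacle is bookkeeping: justifying that the reversed trajectory from $y$ is genuinely the one selected by the execution formula at $y$, i.e.\ that no earlier odd-indexed $x_{2j+1}$ with $j<k$ lies in $A$. This is immediate from the minimality of $k$ in the forward trajectory: any such earlier landing would contradict the choice of $k$. So the proof is essentially a symmetry argument based on the palindromic nature of the word $F(GF)^k$ composed of the two involutions $F$ and $G$.
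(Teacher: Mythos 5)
Your proof is correct and follows essentially the same route as the paper: both rest on the observation that $\sigma\permplus\tau$ and $f\permplus f^\star$ are involutions, so the word $(\sigma\permplus\tau)\left[(f\permplus f^\star)(\sigma\permplus\tau)\right]^k$ is its own inverse, and the reversed trajectory from $y=\Ex(F,G)(x)$ is again the one selected by the execution formula. The paper states this more tersely (leaving implicit the uniqueness of the exponent $k$, which you justify explicitly via the intermediate points lying in $B$), but the underlying palindrome argument is identical.
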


\newcommand{\ta}{f \permplus f^\star}
\newcommand{\tb}{\sigma \permplus \tau}
\begin{proof}
    Let $x$ be an element of $\dom(\sigma \cexec{f} \tau)$, there exists $n$ such that
$$(\sigma\cexec{f}\tau)(x) = (\tb) [ (\ta) (\tb) ]^n (x)$$
Note that $(\tb)^\star = \tb$ and $(\ta)^\star = \ta$, and thus,
we have $( (\tb) [(\ta) (\tb)]^n )^\star = 
[(\tb) (\ta)]^n (\tb) = (\tb) [(\ta) (\tb)]^n$. So
$(\sigma\cexec{f}\tau)^2(x) = x$.
\end{proof}

We have $\dom(\sigma\cexec{f}\tau) = (\dom(\sigma) - \dom(f))
\uplus (\dom(\tau)-\codom(f))$. Thus, the $\Ex$ computation does not
tell anything about elements in $(\dom(\sigma)\cap\dom(f))\uplus (\dom(\tau)\cap\codom(f))$.
For such an element $x$ in 
$\dom(\sigma)\cap\dom(f)$, either there exists an $i$ in $\dom(\sigma) - \dom(f)$ with $x$ being
part of the computation of $(\sigma\cexec{f}\tau)(i)$ or there exists an $n$ such that
$(f^\star \tau f \sigma)^n(x) = x$. For those $x$ we get some kind of orbit
$O_x = \{(f^\star \tau f \sigma)^i(x)\st i\in\NN\}$ to which we get
a dual orbit for $\tau$ by setting 
$O'_x = \{(f \sigma f^\star \tau)^i( (f \sigma)(x))\st i\in\NN\}$.
By applying $f \sigma$ we get a bijective correspondence between $O_x$ and
$O'_x$.
We call \emph{double orbit} the set $O_x \cup O'_x$ and we write
$\mathbb{O}(\sigma\cexec{f}\tau)$ for the set of all double orbits.
Let $R = \{ \min_{x\in O\cup O'} x\st (O,O')\in 
\mathbb{O}(\sigma\cexec{f}\tau) \}$.
%Soit $R$ l'ensemble des plus petits éléments~\footnote{Ce qui a un sens 
%ici car ce sont des entiers. Plus généralement, on cherche juste à trouver
%un representant de manière canonique de chaque orbite double.}
%de chaque orbite double de $\sigma\cexec{f}\tau$.
We define the full $\Ex$-composition, written $\sigma \cexece{f} \tau$, 
of domain $\dom(\sigma \cexec{f} \tau) \uplus R$ and such that
$\forall r\in R, (\sigma \cexece{f} \tau)(r) = r$.

We can now give a consequence of theorem~\ref{theoreme:assoc_exec}, stating
some kind of associativity for the $\Ex$-composition.

\begin{prop}
\label{cor:assoc_comp_exec}
Let $\sigma,\tau,\rho$ be pairwise disjoint $w$-permutations with 

\includetikz{diagramassoccompexec}

We have
$\sigma \cexece{f \permplus g} (\tau \cexece{h} \rho)
= (\sigma \cexece{f} \tau) \cexece{g \permplus h} \rho
= (\sigma \cexece{g} \rho) \cexece{f \permplus h} \tau$.
When $h=0$ we get 
$\sigma \cexece{f \permplus g} (\tau \permplus \rho)
= (\sigma \cexece{f} \tau) \cexece{g} \rho
= (\sigma \cexece{g} \rho) \cexece{f} \tau$.
\end{prop}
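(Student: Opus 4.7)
The plan is to reduce all three expressions to a common canonical form by unfolding the definition of $\cexec{\cdot}$ and appealing to Theorem~\ref{theoreme:assoc_exec} and Proposition~\ref{prop:exec sum}, and then to argue separately that the fixed-point completions coincide.

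First, for the $\Ex_0$ part of the compositions, unfolding yields
$$
(\sigma \cexec{f} \tau) \cexec{g \permplus h} \rho
= \Ex\bigl(\Ex(\sigma \permplus \tau, f \permplus f^\star) \permplus \rho,\; g \permplus g^\star \permplus h \permplus h^\star\bigr).
$$
Since $\rho$ is disjoint from $f$ and $f^\star$ (as $\dom(f), \codom(f) \subseteq \dom(\sigma) \cup \dom(\tau)$), Proposition~\ref{prop:exec sum} lets us pull $\rho$ inside the inner $\Ex$, so the first argument becomes $\Ex(\sigma \permplus \tau \permplus \rho,\, f \permplus f^\star)$. Theorem~\ref{theoreme:assoc_exec} then merges the two nested $\Ex$'s into
$$
\Ex\bigl(\sigma \permplus \tau \permplus \rho,\; f \permplus f^\star \permplus g \permplus g^\star \permplus h \permplus h^\star\bigr).
$$
The very same manipulation carried out on the other two bracketings produces the identical canonical expression; by the commutativity of $\permplus$ in the second argument of $\Ex$ (Theorem~\ref{theoreme:assoc_exec}), all three $\Ex_0$-compositions agree.

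The main obstacle is lifting this identity to the \emph{full} $\cexece{\cdot}$-composition, since each outer $\cexece{\cdot}$ adjoins fixed points on the minima of its double orbits, and these representatives are distributed differently across bracketings. The key observation is that the double orbits of any one of the three compositions are exactly the closed cycles of the global dynamics $(\sigma \permplus \tau \permplus \rho) \circ (f \permplus f^\star \permplus g \permplus g^\star \permplus h \permplus h^\star)$ that never reach a truly free port. Concretely, in $(\sigma \cexec{f} \tau) \cexec{g \permplus h} \rho$, the inner double orbits are cycles confined to $\dom(\sigma) \cup \dom(\tau)$ via $f$, while the outer double orbits are cycles through $\rho$ via $g \permplus h$, whose traversal of $\sigma \cexec{f} \tau$ unfolds into a sequence of $\sigma$-$f$-$\tau$ segments; together, every such cycle of the global dynamics appears exactly once, partitioned into inner vs.\ outer.

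Since this characterization of double orbits is bracketing-independent, the three bracketings yield the same collection of orbit-pairs $(O,O')$, hence the same set $R$ of minima, and so the fixed-point parts match on the nose. The correspondence between an ``outer'' double orbit and a cycle of the global dynamics can be made precise by a short induction on the number of $f$-steps used by $\sigma \cexec{f} \tau$ along the cycle, relying again on Proposition~\ref{prop:exec sum} to ensure that inner fixed points of $\sigma \cexece{f} \tau$ remain fixed under the outer composition. The statement for $h = 0$ follows directly: adding $h \permplus h^\star = 0$ has no effect on either the $\Ex_0$-formula or the double-orbit structure, so the equality specializes to the displayed one.
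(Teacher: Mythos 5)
Your proof follows the paper's own argument essentially step for step: the $\Ex_0$-part is collapsed to the flat form $\Ex(\sigma\permplus\tau\permplus\rho,\ f\permplus f^\star\permplus g\permplus g^\star\permplus h\permplus h^\star)$ via Proposition~\ref{prop:exec sum} and Theorem~\ref{theoreme:assoc_exec}, exactly as the paper does, and your identification of the residual double orbits with bracketing-independent closed cycles of the global dynamics is the same idea the paper realizes by explicitly re-factoring the generating word $x=(\prod_{i=1}^n F_i)(x)$ into the four shapes of factors. The one delicate point --- that a genuinely ternary cycle yields, in different bracketings, double orbits consisting of \emph{different} gluing ports (the ports of whichever gluing is outermost), so that agreement of the chosen minima still deserves justification --- is treated at the same informal level in your write-up as in the paper's, so your proof is correct to the same extent and by essentially the same route.
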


\begin{proof}
    This proposition states in fact two separated results, one about
    $\Ex_0$-composition and the other about double orbits.

    We have $\sigma \cexec{f \permplus g} (\tau \cexec{h} \rho)
    = \Ex(\sigma \permplus \Ex(\tau\permplus\rho,h\permplus h^\star),
    f \permplus f^\star \permplus g \permplus g^\star)
    = \Ex(\Ex(\sigma\permplus\tau\permplus\rho,h\permplus h^\star),
    f \permplus f^\star \permplus g \permplus g^\star)$ by proposition~\ref{prop:exec sum} 
    and $= \Ex(\sigma\permplus\tau\permplus\rho,
    f \permplus f^\star \permplus g \permplus g^\star\permplus h \permplus 
    h^\star)$ by theorem~\ref{theoreme:assoc_exec}.
    We get the other equalities in the same way.

    For the equalities involving double orbits, we set
    $$\mathbb{O}_3 = \mathbb{O}(\sigma \cexec{f \permplus g} (\tau 
    \cexec{h} \rho)) - \mathbb{O}(\tau \cexec{h} \rho) 
    - \mathbb{O}(\sigma \cexec{f} \tau) - \mathbb{O}(\sigma \cexec{g} \rho)$$
    To conclude, it suffices to show that double orbits in
    $\mathbb{O}_3$ do not depend on the order of composition.
    Indeed, such an orbit is generated by an element $x$ such that
    \begin{eqnarray*}
        x & = & ((f^\star+g^\star) (\tau \cexec{h} \rho)
        (f+g) \sigma )^n(x)  \\
        & = & (\prod_{i=1}^n  (f^\star+g^\star)
        (\tau+\rho) ((h + h^\star) (\tau+\rho))^{k_i}
        (f+g) \sigma)(x) \\
        & = & (\prod_{i=1}^n F_i)(x)
    \end{eqnarray*}
    where the $F_i$s are of the following four shapes:
    $$g^\star \rho h \tau (h^\star \rho h \tau)^{k_i} f \sigma,
    f^\star \tau (h^\star \rho h \tau)^{k_i} f \sigma,
    g^\star \rho (h \tau h^\star \rho)^{k_i} g \sigma, \text{ and }
    f^\star \tau h^\star \rho (h \tau h^\star \rho)^{k_i} g \sigma$$
    When $k_i= 0$ they can have the shape 
    $g^\star \rho g \sigma$ and $f^\star \tau f \sigma$. This is enough to
    be able to group the expression by factoring $\sigma \cexec{g} \rho$ 
    or $\sigma \cexec{f} \tau$, 
    and thus, to retrieve the expressions of double orbits in any order of
    composition.
\end{proof}

\begin{rem}
    The definition of full $\Ex$-composition by means of double orbits could
    seem like a lot of trouble. We will see in the next section that the recovered
    fixpoints will allow us to interpret loops in interaction nets. One might
    argue that loops do not have to be recovered at any cost, and if our framework
    cannot see them it is for the best. In fact there are real justifications
    for loops, the main point being that seeing loops is what makes our definition 
    algebraically free. This freeness is really important as it can be seen as a 
    separation of syntax
    from semantics. A detailed discussion of the need of loops in the context
    of compact-closed categories can be found in~\cite{Abramsky05}.
\end{rem}

\section{The statics of interaction nets}
We fix a countable set $\mathcal{S}$, whose elements are called \emph{symbols},
and a function $\alpha : \mathcal{S} \rightarrow \mathbb{N}$,
the \emph{arity}. We will define nets atop $\NN$ and in this context
an integer will be called a \emph{port}.

\begin{defi}
\label{defi:reseaux_interaction}
An interaction net is an ordered pair $R = (\sigma_w,\sigma_c)$ where:
\begin{enumerate}[$\bullet$]
    \item $\sigma_w$ is a $w$-permutation.
        We write $P_l(R)$ for the fixed points of $\sigma_w$ 
        and $P(R)$ for the others, called \emph{ports of the net} $R$.
    \item $\sigma_c$ is a partial permutation of $P(R)$ with
        pointed orbits and labelled by $\mathcal{S}$ in such a way
        that $\forall o\in\Orbs(\sigma_c), |o|=\alpha(l(o)) + 1$ where 
        $l$ is the labelling function.
\end{enumerate}
\end{defi}

The elements of $P_l(R)$ are called \emph{loops} and the other orbits of 
$\sigma_w$, which are necessarily of length $2$, are called \emph{wires}. 
The domain of $\sigma_w$ is called the \emph{carrier} of the net.
We write $P_c(R) = \dom(\sigma_c)$, whose elements are called \emph{cell ports}, and $P_f(R) = P(R) - P_c(R)$, whose elements are called \emph{free ports}.

An orbit of $\sigma_c$ is called a \emph{cell}. We write $\pal$ for the 
pointing function of $\sigma_w$.
Let $c$ be a cell, $\pal(c)$ is its \emph{principal port}
and for $i<|c|$ the element
$(\sigma_c^i \circ \pal)(c)$ is its \emph{$i$th auxiliary port}. 

Note that a port of a net is present in exactly one wire and at most one cell.

\subsection{Representation}
Nets admit a very natural representation. We shall draw a cell of symbol
$A$ as a triangle 
\includetikzinline{intro_cell}
where the principal port is the dot on the apex and
auxiliary ports are lined up on the opposing edge. We draw free ports
as points. To finish the drawing we add a line between any two ports
connected by a wire, and draw circles for loops.

As an example consider the net $R = (\sigma_w,\sigma_c)$ with
$$\sigma_w = (1)(2\ 3)(4\ 5)(6\ 7)(8\ 9) \text{ and } 
\sigma_c=(\point{4}\ 3)_A(\point{5}\ 6\ 7)_B$$
where permutations are given by cycle decomposition and
$(\point{c_1}\ c_2\ \dots\ c_n)_S$ is a cell of point $c_1$ and
symbol $S$.
This net will have the representation 
\includetikzcentered{exemplerepr}

\subsection{Morphisms of nets and renaming}
\begin{defi}
Let $R = (\sigma_w,\sigma_c)$ and $R' = (\sigma'_w,\sigma'_c)$ be two
interaction nets. A function $f : P(R) \mapsto P(R')$ is a 
\emph{morphism from $R$ to $R'$} if and only if 
$$f \circ \sigma_w = \sigma_w' \circ f,\quad
f(P_c(R)) \subseteq P_c(R'),$$
$$\forall p \in P_c(R), (f \circ \sigma_c)(p)
= (\sigma'_c \circ f)(p),$$ and $\forall o\in\Orbs(\sigma_c)$
we have
$(f\circ\pal)(o) = (\pal\circ f)(o)$ and $l(o) = (l\circ f)(o)$.
When $f$ is the identity on $P_f(R)$ it is said to be an
\emph{internal morphism}.
\end{defi}

\begin{exa}
Consider the net:
$$R = ( (1\ 2)(3\ 4)(5\ 6)(7\ 8)(9\ 10)(11\ 12)(13\ 14),
   (\point{2}\ 3\ 5)_A (\point{8}\ 9\ 11)_A )$$
of representation:
\begin{ctp}
    \matrix[column sep=1cm]{
    \inetcell(A){A} & \inetcell(Ap){A} & \node (p14) {}; \\
    };
    \scriptsize
    \inetwirefree(A.pal)(A.left pax)(A.right pax)
        (Ap.pal)(Ap.left pax)(Ap.right pax);
    \inetwiredraw (p14) node[above]{$14$}
    -- +(0,-.5) node(p15){} node[below]{$13$};

    \inetport(p14)(p15)
    \node[left] at (A.right pax) {$3$};
    \node[right] at (A.left pax) {$5$};
    \node[left] at (A.pal) {$2$};
    \node[below] at (A.above pal) {$1$};
    \node[above] at (A.above right pax) {$4$};
    \node[above] at (A.above left pax) {$6$};

    \node[left] at (Ap.right pax) {$9$};
    \node[right] at (Ap.left pax) {$11$};
    \node[left] at (Ap.pal) {$8$};
    \node[below] at (Ap.above pal) {$7$};
    \node[above] at (Ap.above right pax) {$10$};
    \node[above] at (Ap.above left pax) {$12$};
\end{ctp}
and the net:
$$S = ( (1\ 2)(3\ 4)(5\ 6)(7\ 8),
    (\point{2}\ 3\ 5)_A (\point{8})_B )$$
of representation:
\begin{ctp}
    \matrix[column sep=1cm]{
    \inetcell(A){A} & \inetcell[arity=0](B){B}  \\
    };
    \scriptsize
    \inetwirefree(A.pal)(A.left pax)(A.right pax)
        (B.pal);

    \node[left] at (A.right pax) {$3$};
    \node[right] at (A.left pax) {$5$};
    \node[left] at (A.pal) {$2$};
    \node[below] at (A.above pal) {$1$};
    \node[above] at (A.above right pax) {$4$};
    \node[above] at (A.above left pax) {$6$};

    \node[left] at (B.pal) {$8$};
    \node[below] at (B.above pal) {$7$};
\end{ctp}

Let $f$ be the application defined by $f(1) = f(7) = f(13) = 1$, 
$f(2) = f(8) = f(14) = 2$, $f(3) = f(9) = 3$, $f(5) = f(11) = 5$, 
$f(4) = f(10) = 4$ and $f(6) = f(12) = 6$. This is a morphism from $R$
to $S$.
\end{exa}

\begin{rem}
The equality $f \circ \sigma_w = \sigma_w' \circ f$
seems quite strong, but could in fact be deduced from a simple inclusion
of functional graphs, $f \circ \sigma_w
\subseteq \sigma_w' \circ f$. Indeed, let $(p,p')$ be in the graph of
$\sigma_w' \circ f$, we can compute $(f \circ \sigma_w)(p)$
which by the inclusion cannot be anything else than $p'$.
\end{rem}

Let us detail a bit more this definition. We note that for any two
partial permutations $\sigma$ and $\tau$, the equation $f \circ \sigma = 
\sigma' \circ f$ induces that a $o\in \Orbs(\sigma)$ is mapped to 
an element $f(o)\in\Orbs(\sigma')$ such that $|f(o)|$ is a divisor of $|o|$.

In this case a loop is sent to a loop, a wire
to a loop or a wire, and a cell to another cell. The last two equations
say that the principal port of a cell is mapped to a principal port, and
symbols are preserved. So a cell is mapped to a cell of same arity, and 
each port is mapped to the same type of port. Moreover only a wire
linking free ports can be mapped to a loop or any kind of wire. As 
soon as the wire is linking one cell port the third condition on the
morphism must send it to a wire of the same type.

With those facts, it is natural to call \emph{renaming} 
(resp. \emph{internal renaming}) an isomorphism (resp. internal isomorphism).
An isomorphism class captures interaction nets as they are drawn on paper.
On the other hand, an internal isomorphism class corresponds to
interaction nets drawn where we have also given distinct names to free
ports, hence the name internal.
This is an important notion because the drawing
\includetikzinline{morpha}
is the same as
\includetikzinline{morphb}
Whereas the drawing
\includetikzinline{morphc}
is different from
\includetikzinline{morphd}.

In fact, as soon as we would like to consider nets as some kind of terms, we
will have to consider them up to internal isomorphism. Free ports
correspond to free variables, whereas cell ports correspond to bound
variables. For example the $\lambda$-term
$\lambda x. (x) y$ is of course the same as $\lambda z. (z) y$ but it is distinct
from $\lambda x. (x) z$.

\begin{rem}
\label{rmq:disjoint carriers}
Given the fact that nets have finite carriers we can always consider
that two nets have disjoint carriers up to renaming.
\end{rem}

\section{Tools of the trade}
We give here the main tools that are going to be crucial to our definition
of reduction.
\subsection{Gluing and cutting}
\begin{defi}
Let $R = (\sigma_w, \sigma_c)$ and $R' = (\sigma_w', \sigma_c')$ be two
nets with disjoint carriers\footnote{Which is not a loss of generality 
thanks to remark~\ref{rmq:disjoint carriers}.} and let
$f$ be a partial injection of domain included in $P_f(R)$ and codomain
included in $P_f(R')$.
We call \emph{gluing of $R$ and $R'$ along $f$} the net
$R\glue{f}R' = (\sigma_w \cexece{f} \sigma_w', \sigma_c \permplus \sigma'_c)$.
\end{defi}

From this definitions we get the following obvious facts:
\begin{eqnarray*}
P(R \glue{f} R') &=& (P(R) - \dom(f)) \uplus (P(R') - \codom(f)) \\
P_c(R \glue{f} R') &=& P_c(R) \uplus P_c(R') \\
P_f(R \glue{f} R') &=& 
    (P_f(R) - \dom(f)) \uplus (P_f(R') - \codom(f))\\
    R \glue{f} R' &=& R' \glue{f^\star} R
\end{eqnarray*}
For the special case of gluing where $f=0$ we have 
$R\glue{0}R'=(\sigma_w\permplus\sigma_w',\sigma_c\permplus\sigma'_c)$, we write
this special kind of gluing $R\dglue R'$, it is the so-called parallel 
composition of the two nets.

Fig.~\ref{fig:comp_reseaux} gives a representation of gluing.

\begin{prop}
\label{fact:gluing identity}
If $R = R \glue{f} R'$ then $f=0$ and $R'=\mathbf{0} = (0,0)$. If $\mathbf{0}
= R \glue{f} R'$ then $f = 0$ and $R = R' = \mathbf{0}$.
\end{prop}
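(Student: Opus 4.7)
The plan is to split each equation into its two componentwise equalities (one on $\sigma_c$ and one on $\sigma_w$) using the explicit description of gluing listed right after the definition, and then to chase the resulting disjoint-sum equalities until only trivial data survive.

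For the first assertion, $R = R \glue{f} R'$ decomposes into $\sigma_c = \sigma_c \permplus \sigma_c'$ and $\sigma_w = \sigma_w \cexece{f} \sigma_w'$. The first equation forces $\sigma_c' = 0$ by uniqueness of the disjoint sum. For the second, I expand the carrier of the right-hand side as
$$(\dom(\sigma_w) - \dom(f)) \uplus (\dom(\sigma_w') - \codom(f)) \uplus R_{\mathrm{new}},$$
where $R_{\mathrm{new}}$ collects the representatives of the double orbits contributed by the full $\Ex$-composition. Intersecting with $\dom(\sigma_w')$, the disjointness of the carriers of $R$ and $R'$ forces $\codom(f) = \dom(\sigma_w')$ (so in particular $P_l(R') = \emptyset$) and $R_{\mathrm{new}} \cap \dom(\sigma_w') = \emptyset$; together with the defining inclusion $R_{\mathrm{new}} \subseteq \dom(f) \cup \codom(f)$ this gives $R_{\mathrm{new}} \subseteq \dom(f)$. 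Intersecting instead with $\dom(f)$ yields $R_{\mathrm{new}} = \dom(f)$. The decisive observation is that every $x \in R_{\mathrm{new}}$ is by construction a fixed point of $\sigma_w \cexece{f} \sigma_w' = \sigma_w$, hence a loop; but $R_{\mathrm{new}} = \dom(f) \subseteq P_f(R)$ is disjoint from $P_l(R)$. So $R_{\mathrm{new}} = \emptyset$, hence $\dom(f) = \emptyset$ and $f = 0$; the composition collapses to the parallel composition $\sigma_w \permplus \sigma_w'$, from which $\sigma_w' = 0$ follows just as for $\sigma_c'$. Combined with $\sigma_c' = 0$ this gives $R' = \mathbf{0}$.

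For the second assertion, the analogous splitting gives $\sigma_c = \sigma_c' = 0$, so all ports are free, and $\sigma_w \cexece{f} \sigma_w' = 0$. Emptiness of the carrier of the latter forces each of its three pieces to vanish, hence $\dom(f) = \dom(\sigma_w)$ (so $P_l(R) = \emptyset$), $\codom(f) = \dom(\sigma_w')$ (so $P_l(R') = \emptyset$), and no double orbits, $R_{\mathrm{new}} = \emptyset$. The finishing step is a short pigeonhole: if $\dom(\sigma_w)$ were nonempty, for any $x$ in it the iteration $(f^\star \sigma_w' f \sigma_w)^n(x)$ is a bijection on the finite set $\dom(\sigma_w) = \dom(f)$ with no exit, so it eventually returns to $x$ and produces a double orbit, contradicting $R_{\mathrm{new}} = \emptyset$. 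Hence $\sigma_w = 0$, symmetrically $\sigma_w' = 0$, and $f = 0$, so $R = R' = \mathbf{0}$.

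The main subtlety to watch is the bookkeeping of the three populations that coexist in the carrier of a gluing: honest $\Ex$-paths between free ports, the pre-existing loops inherited from $R$ and $R'$, and the new fixed points in $R_{\mathrm{new}}$ recording freshly created loops. Once these roles are kept straight, both contradictions are one-liners: in part one the new loops cannot live inside $\dom(f)$, and in part two the execution iteration has nowhere to exit.
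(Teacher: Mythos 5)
Your proof is correct, and its skeleton (split the equation into its $\sigma_c$- and $\sigma_w$-components, then exploit disjointness of the carriers) is the same as the paper's; but the paper's own proof is a two-line appeal to the displayed facts following the definition of gluing ($P_c$, $P_f$ and $P$ of a gluing decompose as disjoint unions), from which it concludes that $R'$ has no cells, no free ports and no loops, with the second assertion dismissed as ``similar.'' Where you genuinely diverge is in how $f=0$ is obtained: the paper's route is the immediate one, $P(R) = (P(R)-\dom(f)) \uplus (P(R')-\codom(f))$ together with carrier disjointness gives $P(R')=\codom(f)$ and then $P(R)=P(R)-\dom(f)$, hence $\dom(f)=\emptyset$; you instead push the analysis down to the carrier of the full $\Ex$-composition, derive $R_{\mathrm{new}}=\dom(f)$, and kill it by observing that the recovered fixed points would have to be loops sitting inside $P_f(R)$. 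Your route costs more bookkeeping (one must know that the $\Ex_0$-part creates no spurious fixed points outside $R_{\mathrm{new}}$, which your carrier decomposition silently uses) but it buys an explicit accounting of exactly which ports go where, and in particular your second part — the pigeonhole showing that a total identification of two nonempty wire permutations must close up into a double orbit — is a real argument where the paper offers none. Both are valid; yours is the more self-contained of the two.
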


\begin{proof}
    We will only prove the first assertion, the second being similar.
    It is a direct consequence of the previous facts, $R'$ must have
    no cells, no free ports and no loops. The only net having this property
    is the empty net $\mathbf{0}$.
\end{proof}

We can get some kind of associativity property for gluing.
\begin{prop}
\label{prop:assoc_comp_reseau}
Let $R=(\sigma_w,\sigma_c)$,
$S=(\tau_w,\tau_c)$ and $T=(\rho_w,\rho_c)$ 
be nets of disjoint carriers and let $f,g$ and $h$ be partial injections
satisfying the diagram of proposition~\ref{cor:assoc_comp_exec} with respect to 
$\sigma_w, \tau_w$ and $\rho_w$.

We have
$R \glue{f \permplus g} (S \glue{h} T)
= (R \glue{f} S) \glue{g \permplus h} T
= (R \glue{g} T) \glue{f \permplus h} S$.
\end{prop}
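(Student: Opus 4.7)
The plan is to unfold the definition of gluing on each side of the claimed equalities; the gluing of two nets is a pair $(\sigma_w \cexece{f} \sigma_w', \sigma_c \permplus \sigma'_c)$, so the equalities split into two independent identities: one on the $w$-permutation component handled by execution, and one on the cell component handled by $\permplus$.

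For the wire component, the three candidate wire parts are
$\sigma_w \cexece{f \permplus g} (\tau_w \cexece{h} \rho_w)$,
$(\sigma_w \cexece{f} \tau_w) \cexece{g \permplus h} \rho_w$, and
$(\sigma_w \cexece{g} \rho_w) \cexece{f \permplus h} \tau_w$.
Since $R$, $S$ and $T$ have pairwise disjoint carriers, $\sigma_w$, $\tau_w$ and $\rho_w$ are pairwise disjoint $w$-permutations, and the hypothesis that $f$, $g$, $h$ satisfy the diagram of Proposition~\ref{cor:assoc_comp_exec} is exactly the hypothesis of that proposition. So the three wire components coincide by direct application of Proposition~\ref{cor:assoc_comp_exec}.

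For the cell component, the three candidates are $\sigma_c \permplus (\tau_c \permplus \rho_c)$, $(\sigma_c \permplus \tau_c) \permplus \rho_c$ and $(\sigma_c \permplus \rho_c) \permplus \tau_c$. The carriers being disjoint makes the required disjointness for $\permplus$ automatic, and $\permplus$ is associative and commutative on pairwise disjoint partial injections. Hence all three expressions are equal.

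The main subtlety, which is really the only thing to check beyond invoking the earlier results, is that the partial injections used in the iterated gluings actually have the right domains and codomains: for instance in $(R \glue{f} S) \glue{g \permplus h} T$ we need $\dom(g \permplus h) \subseteq P_f(R \glue{f} S)$ and $\codom(g \permplus h) \subseteq P_f(T)$. This follows from the formula $P_f(R \glue{f} S) = (P_f(R) - \dom(f)) \uplus (P_f(S) - \codom(f))$ together with the disjointness conditions built into the diagram of Proposition~\ref{cor:assoc_comp_exec} (which guarantees in particular that $\dom(g) \subseteq P_f(R) - \dom(f)$ and $\dom(h) \subseteq P_f(S) - \codom(f)$). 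Once this bookkeeping is made explicit, nothing beyond Proposition~\ref{cor:assoc_comp_exec} and associativity--commutativity of $\permplus$ is needed, so I do not expect any real obstacle beyond keeping the symmetric roles of $f$, $g$, $h$ straight.
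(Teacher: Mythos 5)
Your proof is correct and follows essentially the same route as the paper's: the paper likewise observes that the wire component is exactly (a restriction of) Proposition~\ref{cor:assoc_comp_exec} and the cell component is associativity of $\permplus$. The domain bookkeeping you add at the end is a reasonable extra check but is already implicit in the hypothesis that $f$, $g$, $h$ satisfy the diagram of Proposition~\ref{cor:assoc_comp_exec}.
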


\begin{proof}
The wire part of the equality is a restriction of 
proposition~\ref{cor:assoc_comp_exec} and the cell part is 
the associativity of $\permplus$.
\iffalse
Set
$R_2 \perp_g (R_1 \perp_f R) = (\sigma^{12}_w, \sigma^{12}_c)$
and
$R_1 \perp_f (R_2 \perp_g R) = (\sigma^{21}_w, \sigma^{21}_c)$.
The equality $\sigma^{12}_w = \sigma^{21}_w$ comes from
the corollary~\ref{cor:assoc_comp_exec}.
The equality $\sigma^{12}_c = \sigma^{21}_c$ comes from
the associativity of $\permplus$.

The last equality is proved in the same way.
\fi
\end{proof}
The following corollary will often be sufficient.
\begin{cor}
    \label{cor:assocgluing}
    If we have a decomposition $R_0 = R \glue{f} (S \glue{g} T)$ then
    there exists $f_S, f_T$ such that $R_0 = (R \glue{f_S} S) 
    \glue{g+f_T} T$.
\end{cor}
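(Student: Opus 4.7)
The plan is to reduce this to a direct application of Proposition~\ref{prop:assoc_comp_reseau}. The key observation is that because $P_f(S \glue{g} T) = (P_f(S) - \dom(g)) \uplus (P_f(T) - \codom(g))$, the codomain of $f$ splits along this disjoint union, and so $f$ itself splits as a disjoint sum.

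First I would define $f_S = \frestr{f}{P_f(R)}{P_f(S) - \dom(g)}$ and $f_T = \frestr{f}{P_f(R)}{P_f(T) - \codom(g)}$. Since $\codom(f) \subseteq P_f(S \glue{g} T)$, every element of $\dom(f)$ is sent either into $P_f(S) - \dom(g)$ or into $P_f(T) - \codom(g)$, so $\dom(f_S)$ and $\dom(f_T)$ partition $\dom(f)$; similarly for codomains. Hence $f = f_S \permplus f_T$ as disjoint partial injections, with $f_S$ acting between $P_f(R)$ and $P_f(S)$ and $f_T$ acting between $P_f(R)$ and $P_f(T)$.

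Next I would verify that $f_S$, $g$, and $f_T$ fit the hypothesis of Proposition~\ref{prop:assoc_comp_reseau} with respect to the three wire-permutations of the pairwise disjoint-carrier nets $R$, $S$, $T$: $f_S$ connects the free ports of $R$ and $S$, $g$ connects those of $S$ and $T$, and $f_T$ connects those of $R$ and $T$. This is exactly the shape of the diagram in Proposition~\ref{cor:assoc_comp_exec}. Invoking the associativity of gluing then yields
\[ R \glue{f_S \permplus f_T} (S \glue{g} T) = (R \glue{f_S} S) \glue{f_T \permplus g} T. \]
Since $f = f_S \permplus f_T$ and $\permplus$ is commutative on disjoint partial injections, the left-hand side is $R_0$ and the right-hand side is $(R \glue{f_S} S) \glue{g + f_T} T$, giving the desired decomposition.

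There is no real obstacle here; the only thing to be careful about is the splitting step, namely that the codomain of $f$ is indeed contained in the free ports of $S \glue{g} T$ (which by definition of gluing excludes $\dom(g) \uplus \codom(g)$), so that $f_S$ and $f_T$ are genuine partial injections with disjoint codomains and their sum recovers $f$ exactly.
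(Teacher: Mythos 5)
Your proof is correct and is exactly the argument the paper intends (the corollary is stated without proof as a consequence of Proposition~\ref{prop:assoc_comp_reseau}): split $f$ along the disjoint union $P_f(S \glue{g} T) = (P_f(S) - \dom(g)) \uplus (P_f(T) - \codom(g))$ into $f_S \permplus f_T$ and apply associativity of gluing. The one point that needed care --- that $\codom(f)$ lands in the free ports of $S \glue{g} T$ so the restriction really does recover $f$ as a disjoint sum with the right (co)domain constraints --- is the point you checked.
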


\begin{figure}
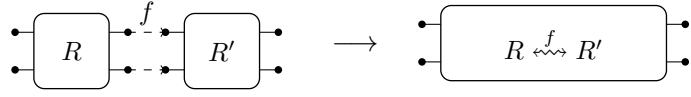

\centering
\includetikzinline{gluingleft}
\quad
$\longrightarrow$
\quad
\includetikzinline{gluingright}
\caption{\label{fig:comp_reseaux}Representation of 
the gluing of two interaction nets}
\end{figure}
\begin{figure}
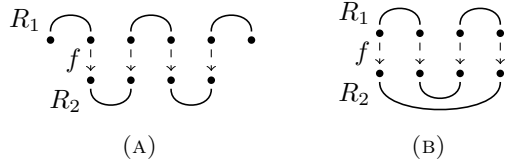

\centering
\subfloat[]{
\includetikz{cuttingwire}
} \quad
\subfloat[]{
\includetikz{cuttingloop}
}
\caption{\label{fig:cutting}
Representation of two special cuttings: (a) a cutting of a single wire and
(b) a cutting of a loop}
\end{figure}
We can use the gluing to define dually the notion of cutting a subnet of
an interaction net.

\begin{defi}
Let $R$ be a net, we call \emph{cutting of $R$} a triple $(R_1, f, R_2)$
such that $R = R_1 \glue{f} R_2$. Any net $R'$ 
appearing in a cutting of $R$ is called a \emph{subnet of $R$}, noted
$R' \subseteq R$.
\end{defi}

Fig.~\ref{fig:cutting} gives an example of cutting. The fact that
we can cut many times a wire or that we can divide a loop in many
wires hints at the complexity behind these definitions.

\begin{prop}
The relation $\subseteq$ is an ordering of nets.
\end{prop}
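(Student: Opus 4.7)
The plan is to check the three axioms of a partial order in turn: reflexivity, transitivity, and antisymmetry. Throughout, I will freely invoke remark~\ref{rmq:disjoint carriers} to rename carriers whenever the associativity property of gluing is needed, since the notion of subnet is preserved by internal renaming.

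\textbf{Reflexivity and transitivity.} Reflexivity comes from the trivial cutting $R = R \glue{0} \mathbf{0}$: with the empty net $\mathbf{0} = (0,0)$ and the empty partial injection, the $\Ex$-composition reduces to $\sigma_w \permplus 0 = \sigma_w$, and similarly for cells, giving back $R$. For transitivity, assume $R \subseteq S$ and $S \subseteq T$, witnessed by $S = R \glue{f} R'$ and $T = S \glue{g} S'$. After renaming so that $R$, $R'$ and $S'$ have pairwise disjoint carriers, substitution yields $T = (R \glue{f} R') \glue{g} S'$. Applying proposition~\ref{prop:assoc_comp_reseau} reassociates this as $T = R \glue{h} (R' \glue{k} S')$ for suitable partial injections $h, k$, which exhibits $R$ as a subnet of $T$.

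\textbf{Antisymmetry.} Assume $R \subseteq S$ and $S \subseteq R$, with cuttings $S = R \glue{f} R'$ and $R = S \glue{g} S'$. Substituting the second equation into the first and reassociating (again via proposition~\ref{prop:assoc_comp_reseau}, after renaming to obtain disjoint carriers for $S$, $S'$ and $R'$), we get a decomposition of the form $S = S \glue{g'} (S' \glue{h} R')$ for appropriate $g', h$. Proposition~\ref{fact:gluing identity} (first assertion) applied to this equation forces $g' = 0$ and $S' \glue{h} R' = \mathbf{0}$. A second application of proposition~\ref{fact:gluing identity} (second assertion) to the latter equation gives $S' = R' = \mathbf{0}$ and $h = 0$. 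Therefore $R = S \glue{g} S' = S \glue{0} \mathbf{0} = S$, closing the argument.

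\textbf{Main obstacle.} The delicate point is the disjoint-carrier hypothesis in proposition~\ref{prop:assoc_comp_reseau}: when substituting one cutting inside another, the original definitions guarantee disjointness only between the two nets of each individual cutting, not among all three. Ensuring global pairwise disjointness requires a careful use of remark~\ref{rmq:disjoint carriers} to rename one of the three components before applying associativity. Once this bookkeeping is settled, the proof reduces cleanly to two invocations of associativity and two of proposition~\ref{fact:gluing identity}.
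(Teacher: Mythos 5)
Your proof is correct and follows essentially the same route as the paper: reflexivity via $R = R \glue{0} \mathbf{0}$, and transitivity and antisymmetry by substituting one cutting into the other, reassociating (the paper packages proposition~\ref{prop:assoc_comp_reseau} as corollary~\ref{cor:assocgluing} for exactly this purpose), and then applying proposition~\ref{fact:gluing identity} twice for antisymmetry. Your explicit attention to the disjoint-carrier bookkeeping is a reasonable addition but does not change the argument.
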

\begin{proof} The relation $\subseteq$ is \textbf{reflexive}: 
    $R = R \glue{0} \mathbf{0}$ and thus, $R \subseteq R$.

It is \textbf{antisymmetric}: let $R_1$ and $R_2$ be nets such that
$R_1\subseteq R_2$ and $R_2 \subseteq R_1$. We have
$R_1 = R_2 \glue{f} R'_2$ and $R_2 = R_1 \glue{g} R'_1$.

So $R_2 = (R_2 \glue{f} R'_2) \glue{g} R'_1$.
By applying the corollary~\ref{cor:assocgluing} we get
$R_2 = R_2 \glue{f_1} (R'_2 \glue{g+f_2} R'_1)$.
and by applying the proposition~\ref{fact:gluing identity} twice 
we get $R'_2 = R'_1 = \mathbf{0}$. So $R_1 = R_2$.

And it is
\textbf{transitive}: let $R\subseteq S \subseteq T$, then 
$S = R \glue{f} R'$ and $T = S \glue{g} S'$, so
$T = (R \glue{f} R') \glue{g} S'$. By applying the 
corollary~\ref{cor:assocgluing} we have $T = R \glue{f_1} (R' 
\glue{g+f_2} S')$, that is to say $R \subseteq T$.
\end{proof}

\subsection{Extending morphisms by gluing}
\begin{prop}
If $\alpha : R \rightarrow S$ is a morphism of nets, and $T =
S \glue{f} S'$, then there exists a morphism $\widehat{\alpha} : R \rightarrow T$
extending $\alpha$.
\end{prop}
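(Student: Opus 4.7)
The plan is to define $\widehat{\alpha}$ by agreeing with $\alpha$ wherever possible, and tracing through the gluing interface when $\alpha$ lands inside $\dom(f)$. Concretely, for $p \in P(R)$: if $\alpha(p) \in P(T)$ (equivalently $\alpha(p) \notin \dom(f)$, which in particular covers every cell port since $\alpha(P_c(R)) \subseteq P_c(S) \subseteq P_c(T)$), set $\widehat{\alpha}(p) = \alpha(p)$; otherwise $\alpha(p) \in \dom(f) \subseteq P_f(S)$, and $\widehat{\alpha}(p)$ is defined by iterating $f$, then $\sigma_w^{S'}$, then $f^\star$ (while still in $\codom(f)$), then $\sigma_w^{S}$ (while still in $\dom(f)$), until the trajectory escapes $\dom(f) \cup \codom(f)$ at some port $q \in P(T)$---in which case $\widehat{\alpha}(p) = q$---or forms a double orbit, in which case $\widehat{\alpha}(p)$ is taken to be the associated representative $r$, which is a fixed point of the full $\Ex$-composition.

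Verification of the cell and labelling conditions is immediate: since $\widehat{\alpha}$ agrees with $\alpha$ on $P_c(R)$, and $\sigma_c^T = \sigma_c^S \permplus \sigma_c^{S'}$ with $P_c(T) = P_c(S) \uplus P_c(S')$, the equations on $\sigma_c$, principal points, and labels transfer directly from those known for $\alpha$. The substantive condition to establish is $\widehat{\alpha} \circ \sigma_w^R = \sigma_w^T \circ \widehat{\alpha}$. Given a wire $\{p, p'\}$ of $R$ with $p' = \sigma_w^R(p)$, the morphism property of $\alpha$ yields $\alpha(p') = \sigma_w^S(\alpha(p))$; the trajectory computing $\widehat{\alpha}(p)$ starting from $\alpha(p)$ and the trajectory computing $\widehat{\alpha}(p')$ starting from $\alpha(p')$ then form the two halves of a single execution path---precisely the one that the formula for $\sigma_w^T = \sigma_w^S \cexece{f} \sigma_w^{S'}$ traces out. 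A short case analysis on whether each of $\alpha(p), \alpha(p')$ lies in $P(T)$ or in $\dom(f)$ yields the required equation.

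The main obstacle I anticipate is the double-orbit case: when the chain loops forever inside the interface, both $p$ and $p'$ must be assigned the same representative $r$, and the compatibility with $\sigma_w^T$ reduces to the equation $\sigma_w^T(r) = r$, which holds by construction of the full $\Ex$-composition. Everything outside this case amounts to bookkeeping that unfolds directly from the explicit formula for $\cexece{f}$, so the substance of the proof really sits in the observation that $\widehat{\alpha}$ and $\sigma_w^T$ are defined by following the very same iterated interface crossings.
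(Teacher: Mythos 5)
Your construction is essentially the paper's own proof: the paper likewise keeps $\alpha$ unchanged on cell ports and on wire endpoints that survive the gluing, and defines the image of a wire $(p\ p')$ as the pair of endpoints $q,q'$ of the chain $q \rightarrow \cdots \rightarrow \alpha(p) \rightarrow \alpha(p') \rightarrow \cdots \rightarrow q'$ produced by the $\Ex$-composition, sending both ports to the loop port in the degenerate case. Your forward trace through $f$, $\sigma_w^{S'}$, $f^\star$, $\sigma_w^{S}$ is exactly that chain read outward from $\alpha(p)$ (all maps involved being involutions or inverse pairs), so the two arguments coincide, with your write-up merely making the ``by construction'' verification slightly more explicit.
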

\begin{proof}
It is obvious how to define the image of a cell in $R$ into $T$, because
$\alpha$ maps it to a cell in $S$ and cells are preserved by gluing.
So, the only thing to prove is that we can properly define the image of
a wire in $R$. We consider a wire $(p\ p')$ in $R$ which is mapped to
another wire $(\alpha(p)\ \alpha(p'))$ in $S$ (the case where it is a loop is
trivial as loops are also preserved by gluing). In $T$ this wire has either become
a loop, and thus we send, by $\widehat{\alpha}$, $p$ and $p'$ to the loop port,
or it has become a wire trough the $\Ex$-composition:
$$q \rightarrow ... \rightarrow \alpha(p) \xrightarrow{\sigma_w} \alpha(p') 
\rightarrow ... \rightarrow q'$$
where $S = (\sigma_w,\sigma_c)$,
in which case we define $\widehat{\alpha}(p) = q$ and $\widehat{\alpha}(p') = q'$.

By construction, $\widehat{\alpha}$ is a morphism.
\end{proof}

With $id_R : R \rightarrow R$ being the identity function on ports, and
$T = R \glue{f} S$, we simply write $R \subseteq T$ for the morphism $\widehat{id_R}$
which we refer to as the inclusion map of $R$ into $T$. Note that in our setting
these maps are not just co-extensions of identity, this is due to 
our notion of subnets.

\begin{defi}
We say that $\alpha : R \rightarrow S$ is \emph{almost injective} when there exists
a decomposition $S = \beta(R) \glue{f} R'$ with $\beta$ a renaming and $\widehat{\beta} = 
\alpha$ where $\widehat{\beta}$ is given by the previous proposition. We also
use the notation $\widetilde{\alpha} = \beta$.
\end{defi}

Inclusion maps are the archetypal almost injective morphisms. Indeed, every 
almost injective morphism splits as a renaming followed by an inclusion map.

\subsection{Interfaces and contexts}
To define reduction by using the subnet relation, it would be easier if
we could refer implicitly to the identification function in a gluing.
As an intuition, consider terms contexts with multiple holes, to substitute
completely such contexts we could give a function from holes to terms and
fill them accordingly. But a more natural definition would be to give a
distinct number to each hole and to fill based on a list of terms. 
The substitution would give the 
first term to the first hole, and so on. The following definition is
a direct transposition of this idea in the framework of interaction nets.

\begin{defi}
    We call \emph{interface of a net $R$} a subset $I=\{p_1,\dots,p_n\}$ of 
    $P_f(R)$
together with a linear ordering, the length of the order chain 
$p_1 < \dots < p_n$ is called
the \emph{size}. We say that $R$ contains the interface $I$, noted 
$I \icontains R$. An interface is \emph{canonical} if it contains all
the free ports of a net.

Let $I$ and $I'$ be disjoint interfaces of the same net, we write 
$I I'$ the union of these subsets ordered by the concatenation of the two 
order chains. Precisely 
$x \le_{II'} y \iff x \le_{I} y$ or $x \le_{I'} y$ or $x \in I \wedge
y \in I'$.

Let $I$ and $I'$ be two interfaces of same size, there exists one and only
order-preserving bijection from $I$ to $I'$ that we write $\cord{I}{I'}$
and call the \emph{chord between $I$ and $I'$}.

We call \emph{context} a pair $(R,I)$ where $I$ is an interface 
contained in the net $R$, it is written $\context{R}{I}$.

Let $\context{R}{I}$ and $\context{R'}{I'}$ be two contexts with 
interfaces of same size, we write $$\context{R}{I} \cglue \context{R'}{I'}
= R \glue{\cord{I}{I'}} R'$$
\end{defi}

In the following when we write $\context{R}{I} \cglue \context{R'}{I'}$ we
implicitly assume that $I$ and $I'$ are of same size.

We now can state commutativity of gluing directly, the proof being trivial.
\begin{prop}
$\context{R}{I} \cglue \context{R'}{I'} =
\context{R'}{I'} \cglue \context{R}{I}$\qed
\end{prop}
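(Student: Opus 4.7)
The plan is to unfold both sides of the equality using the definition of $\cglue$ and then reduce the statement to the symmetry property of gluing already listed among the ``obvious facts'' following the definition of $\glue{f}$, namely $R \glue{f} R' = R' \glue{f^\star} R$.

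First I would write
\[
\context{R}{I} \cglue \context{R'}{I'} = R \glue{\cord{I}{I'}} R'
\qquad\text{and}\qquad
\context{R'}{I'} \cglue \context{R}{I} = R' \glue{\cord{I'}{I}} R,
\]
so that after applying the symmetry fact to the right-hand side of the first expression we reduce the goal to the equality of partial injections $(\cord{I}{I'})^\star = \cord{I'}{I}$.

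The second (and only nontrivial) step is then to verify this equality of chords. Since $I$ and $I'$ have the same size, $\cord{I}{I'}$ is a bijection from $I$ to $I'$, and its set-theoretic inverse is a bijection from $I'$ to $I$. Because $\cord{I}{I'}$ is order-preserving, so is its inverse: if $x <_{I'} y$ in $I'$, then setting $x' = (\cord{I}{I'})^{-1}(x)$ and $y' = (\cord{I}{I'})^{-1}(y)$, one cannot have $y' <_I x'$ without contradicting monotonicity of $\cord{I}{I'}$. Hence $(\cord{I}{I'})^\star$ is an order-preserving bijection from $I'$ to $I$, and by the uniqueness clause in the definition of a chord it coincides with $\cord{I'}{I}$.

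There is no real obstacle here: the content of the proposition is essentially that the ``obvious'' symmetry of gluing, which refers to the partial injection $f^\star$, survives when one hides $f$ inside the chord notation. The only thing worth checking carefully is the uniqueness of the order-preserving bijection between two finite linearly ordered sets of the same size, which is immediate by induction on the common size.
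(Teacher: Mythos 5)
Your proof is correct and follows exactly the route the paper intends: the paper omits the proof as trivial, implicitly relying on the listed fact $R \glue{f} R' = R' \glue{f^\star} R$ together with the observation that $(\cord{I}{I'})^\star = \cord{I'}{I}$, which you verify explicitly via uniqueness of the order-preserving bijection.
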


The following trivial fact asserts that any gluing can be seen as a context 
gluing.
\begin{prop}
Let $R \glue{f} R'$ be a gluing, there exist interfaces $I\icontains R$
and $I'\icontains R'$ such that
$R \glue{f} R' = \context{R}{I} \cglue \context{R'}{I'}$.
\end{prop}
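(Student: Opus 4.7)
The plan is to construct the interfaces $I$ and $I'$ directly from the partial injection $f$, exploiting the fact that any linear ordering on $\dom(f)$ determines, via $f$, a matching linear ordering on $\codom(f)$ that turns $f$ into the chord between them.

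Concretely, by definition of gluing we have $\dom(f) \subseteq P_f(R)$ and $\codom(f) \subseteq P_f(R')$, and both sets are finite since nets have finite carriers. First I would fix an arbitrary enumeration $\dom(f) = \{p_1, \dots, p_n\}$ and take $I$ to be this set equipped with the linear order $p_1 < \dots < p_n$. Then I set $I' = \codom(f)$ equipped with the transported order $f(p_1) < \dots < f(p_n)$; this is well-defined because $f$ is a bijection from $\dom(f)$ to $\codom(f)$.

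Since both interfaces have size $n$, the chord $\cord{I}{I'}$ is defined, and by the uniqueness statement in the definition it must be the unique order-preserving bijection $I \to I'$. By construction the map sending $p_i \mapsto f(p_i)$ is an order-preserving bijection from $I$ to $I'$, so $\cord{I}{I'} = f$. Consequently
\[
\context{R}{I} \cglue \context{R'}{I'} = R \glue{\cord{I}{I'}} R' = R \glue{f} R',
\]
which is the required decomposition.

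There is essentially no obstacle here: the statement is really a matter of unwinding the definition of chord and observing that an arbitrary partial injection between finite sets can be realized as an order-preserving bijection by suitably ordering its domain and codomain. The only thing to double-check is that $I$ and $I'$ genuinely qualify as interfaces, which reduces to the inclusions $\dom(f) \subseteq P_f(R)$ and $\codom(f) \subseteq P_f(R')$ already built into the definition of gluing.
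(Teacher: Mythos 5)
Your proposal is correct and follows exactly the paper's own argument: take $I=\dom(f)$ with an arbitrary linear order and transport it along $f$ to $\codom(f)$ so that $f$ becomes the unique order-preserving bijection, i.e. the chord. Nothing further is needed.
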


\begin{proof}
It suffices to take $I = \dom(f)$ with any linear ordering, and to define
the only ordering of $I' = \codom(f)$ such that $f$ is strictly increasing.
\end{proof}

\begin{cor}
$R_1 \subseteq R \iff \exists I_1, R_2, I_2$ such that 
$R = \context{R_1}{I_1} \cglue \context{R_2}{I_2}$.
\end{cor}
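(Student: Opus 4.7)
The plan is to unfold the definitions on each side and invoke the preceding proposition to bridge the two formulations. The statement is essentially a reformulation of the subnet relation in the language of contexts, so the proof should be quite short.

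For the forward direction, I would start from the assumption $R_1 \subseteq R$. By definition of the subnet relation, this gives a cutting $(R_1, f, R_2)$ with $R = R_1 \glue{f} R_2$. Now I would invoke the previous proposition, which asserts that any gluing can be rewritten as a context gluing: concretely, one takes $I_1 = \dom(f)$ endowed with any linear ordering and $I_2 = \codom(f)$ ordered so that $f$ becomes the order-preserving bijection $\cord{I_1}{I_2}$. This yields precisely the decomposition $R = \context{R_1}{I_1} \cglue \context{R_2}{I_2}$.

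For the backward direction, I would start from a decomposition $R = \context{R_1}{I_1} \cglue \context{R_2}{I_2}$. Unfolding the definition of context gluing, this is simply $R = R_1 \glue{\cord{I_1}{I_2}} R_2$, which is a cutting of $R$ witnessing $R_1 \subseteq R$.

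There is no real obstacle here since both directions are essentially bookkeeping. The only subtlety is making sure that $I_1$ and $I_2$ are genuine interfaces of $R_1$ and $R_2$, i.e., subsets of $P_f(R_1)$ and $P_f(R_2)$ respectively; this is guaranteed by the hypotheses on $f$ in the definition of gluing (its domain and codomain lie in the free ports of the glued nets). So the proof reduces to two lines invoking the previous proposition and the definition of $\cglue$.
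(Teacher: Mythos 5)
Your proof is correct and follows exactly the route the paper intends: the corollary is stated without an explicit proof precisely because it is the immediate combination of the definition of cutting/subnet with the preceding proposition turning any gluing into a context gluing (and conversely unfolding the definition of $\cglue$). Nothing is missing.
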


We can now restate corollary~\ref{cor:assocgluing} with interfaces:
\begin{cor}
\label{cor:associativity_gluing}
For all nets $R,S,T$ and interfaces $I,J,K,L$,
there exists interfaces $I'$, $J'$, $K'$, $L'$ such that
$$\context{R}{I} \cglue 
\context{(\context{S}{J} \cglue \context{T}{K})}{L}
= \context{(\context{R}{I'} \cglue \context{S}{J'})}{L'}
\cglue \context{T}{K'} .
$$
\end{cor}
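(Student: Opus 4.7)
The plan is to unfold context gluing into ordinary gluing via chords, apply the associativity already established in Corollary~\ref{cor:assocgluing} (equivalently, Proposition~\ref{prop:assoc_comp_reseau}), and repackage the result as a context gluing with suitably chosen interfaces.

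First I would set $U = \context{S}{J} \cglue \context{T}{K} = S \glue{\cord{J}{K}} T$, turning the left-hand side into $R \glue{\cord{I}{L}} U$. Since $P_f(U) = (P_f(S) - J) \uplus (P_f(T) - K)$, the interface $L$ splits canonically as $L = L_S \uplus L_T$, with $L_S = L \cap P_f(S)$ and $L_T = L \cap P_f(T)$, each inheriting the ordering from $L$. Let $I_S = \cord{I}{L}^{-1}(L_S)$ and $I_T = \cord{I}{L}^{-1}(L_T)$ carry the orderings induced by $I$. Because $\cord{I}{L}$ is the order-preserving bijection and the orderings are linear, its restrictions to $I_S$ and $I_T$ are exactly the chords $\cord{I_S}{L_S}$ and $\cord{I_T}{L_T}$, giving $\cord{I}{L} = \cord{I_S}{L_S} \permplus \cord{I_T}{L_T}$.

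Next, the disjointness hypotheses of Proposition~\ref{prop:assoc_comp_reseau} hold (because $L_S \cap J = \emptyset$ and $L_T \cap K = \emptyset$ by construction), so with $f = \cord{I_S}{L_S}$, $g = \cord{I_T}{L_T}$, $h = \cord{J}{K}$ we obtain
$$R \glue{f \permplus g} (S \glue{h} T) = (R \glue{f} S) \glue{g \permplus h} T.$$

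Finally, I would set $I' = I_S$, $J' = L_S$, $L' = I_T J$ and $K' = L_T K$, using concatenated orderings for $L'$ and $K'$. A direct check gives $\cord{I'}{J'} = f$, and since the chord of a concatenation of interfaces decomposes as the $\permplus$-sum of the individual chords, $\cord{L'}{K'} = g \permplus h$. Hence the right-hand side of the displayed equation is exactly $\context{(\context{R}{I'} \cglue \context{S}{J'})}{L'} \cglue \context{T}{K'}$, as required. The only genuine difficulty is bookkeeping of orderings — namely, verifying that restricting an order-preserving bijection to the preimage of a subset stays order-preserving, and that the chord of a concatenated interface is the $\permplus$-sum of the two component chords; both facts are immediate from linearity of the orderings but must be made explicit.
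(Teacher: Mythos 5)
Your proof is correct and follows exactly the route the paper intends: the corollary is stated there without proof as a restatement of Corollary~\ref{cor:assocgluing} (itself a consequence of Proposition~\ref{prop:assoc_comp_reseau}), and your argument supplies precisely the bookkeeping the paper leaves implicit --- splitting $L$ according to which component of $S \glue{\cord{J}{K}} T$ each port lies in, decomposing the chord as a $\permplus$-sum of chords, and repackaging with concatenated interfaces. I see no gaps.
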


\section{Dynamics}
\label{sec:dynamics}
Given the previous definitions we will now present the dynamics of net. It should be remarked
that our definition of dynamics is quite similar to the usual one: it amounts to finding
a subnet called a redex and substituting it with another subnet. The main difference lies in our 
rigorous definition of subnets.

\begin{defi}
    Let $s_1$ and $s_2$ be symbols.
    We call \emph{interaction rule for $(s_1,s_2)$} a couple 
    $(\context{R_r}{I_r},\context{R_p}{I_p})$ where 
    $$
        R_r =
        \left(
        \begin{array}{l} 
            (b\ c)(a_1\ b_1)\dots(a_n\ b_n)(c_1\ d_1)\dots(c_m\ d_m), \\ 
            (\point{b}\ b_1\ \dots\ b_n)_{s_1} (\point{c}\ c_1\ \dots\ c_m)_{s_2} 
    \end{array} \right)
    $$
    and $I_r$ and $I_p$ are both canonical -- comprised of all free ports --
    and of same size.

    Let $\mathcal{R} = (\context{R_r}{I_r}, \context{R_p}{I_p})$ be a rule.
    we call \emph{reduction by $\mathcal{R}$}
    the binary relation 
    $\xrightarrow{\mathcal{R}}$
    on nets such that 
    for all renaming $\alpha$ and $\beta$, and for all net $S$ with
    $S = \context{R}{I} \cglue \context{\alpha(R_r)}{\alpha(I_r)}$
    we set $S \xrightarrow{\mathcal{R}} S'$ where $S' = \context{R}{I}
    \cglue \context{\beta(R_p)}{\beta(I_p)}$.
\end{defi}

The net $R_r$ has the representation \includetikzinline{redex_repr}.
Remark that the reduction is defined as soon as a net contains a
renaming of the redex $R_r$. This reduction appears to be non-deterministic
but it is only the expansion of a deterministic reduction to cope with 
all possible renamings.

We recall now the formal definition of the main property of interaction nets and
we wish that our definition ensures it.

\begin{defi}
    Let $\xrightarrow{\mathcal{R}}$ be a binary relation on a set $E$, we say that
    it is \emph{strongly confluent} if and only if for all
    $x, y, z \in E$ such that $y \neq z$ and
    $y \xleftarrow{\mathcal{R}} x \xrightarrow{\mathcal{R}} z$  and
    there exists $t \in E$ with
    $y \xrightarrow{\mathcal{R}} t \xleftarrow{\mathcal{R}} z$.
\end{defi}

\begin{prop}
    \label{prop:strong_confluent}
    Let $R$ be a net and $\mathcal{R}_1$, $\mathcal{R}_2$ be two
    interaction rules applicable on $R$ on distinct redexes such that
    $R_1 \xleftarrow{\mathcal{R}_1} R \xrightarrow{\mathcal{R}_2} R_2$
    and all the ports both in $R_1$ and $R_2$ are also in $R$.
    There exists a net $R'$ such that
    $R_1 \xrightarrow{\mathcal{R}_2} R' \xleftarrow{\mathcal{R}_1} R_2$.
\end{prop}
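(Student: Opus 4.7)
The plan is to decompose $R$ as a three-way gluing in which both redexes appear in parallel over a common context, and then to use associativity of gluing (Corollary \ref{cor:associativity_gluing}) to commute the two reductions.

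First I would establish the following simultaneous decomposition. Because $\mathcal{R}_1$ and $\mathcal{R}_2$ apply to $R$ on \emph{distinct} redexes, the two renamed subnets $\alpha_1(R_{r_1})$ and $\alpha_2(R_{r_2})$ involve four different cells, so their cell parts live in disjoint pieces of $\sigma_c$; by Remark \ref{rmq:disjoint carriers} I can further choose $\alpha_1$ and $\alpha_2$ so that their carriers are disjoint. Writing $\mathcal{R}_i = (\context{R_{r_i}}{I_{r_i}}, \context{R_{p_i}}{I_{p_i}})$, I would then show
$$R \;=\; \context{C}{I} \cglue \bigl(\context{\alpha_1(R_{r_1})}{\alpha_1(I_{r_1})} \dglue \context{\alpha_2(R_{r_2})}{\alpha_2(I_{r_2})}\bigr)$$
for some context $\context{C}{I}$ accounting for everything outside the two redexes.

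Next I would apply Corollary \ref{cor:associativity_gluing} to re-bracket this as
$$R \;=\; \bigl(\context{C}{I} \cglue \context{\alpha_2(R_{r_2})}{\alpha_2(I_{r_2})}\bigr) \cglue \context{\alpha_1(R_{r_1})}{\alpha_1(I_{r_1})},$$
so that the definition of $\xrightarrow{\mathcal{R}_1}$ yields
$$R_1 \;=\; \bigl(\context{C}{I} \cglue \context{\alpha_2(R_{r_2})}{\alpha_2(I_{r_2})}\bigr) \cglue \context{\beta_1(R_{p_1})}{\beta_1(I_{p_1})}.$$
A second application of Corollary \ref{cor:associativity_gluing} re-exposes $\alpha_2(R_{r_2})$ as a subnet of $R_1$, and reducing by $\mathcal{R}_2$ gives
$$R' \;=\; \context{C'}{I'} \cglue \context{\beta_2(R_{p_2})}{\beta_2(I_{p_2})} \cglue \context{\beta_1(R_{p_1})}{\beta_1(I_{p_1})}.$$
The symmetric sequence starting from $R_2$ produces the same net by the same re-bracketing, which gives $R_1 \xrightarrow{\mathcal{R}_2} R' \xleftarrow{\mathcal{R}_1} R_2$ as required.

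The main obstacle is the initial three-way decomposition: one has to exhibit the two redexes \emph{simultaneously} as parallel subnets of $R$ with a single shared context, not merely as separate witnesses of $R \subseteq R_i$ for each $i$. The cell part causes no trouble because $\sigma_c$ is a disjoint sum over cells and the two redexes share no cell; the subtle point is the wire part, where one must check that the $\Ex$-compositions gluing the redexes to the context can be unfolded in parallel. This is exactly what Corollary \ref{cor:associativity_gluing} (ultimately Theorem \ref{theoreme:assoc_exec}) delivers. The hypothesis that ports occurring in both $R_1$ and $R_2$ already occur in $R$ is used to choose $\beta_1$ and $\beta_2$ with disjoint carriers, so that the two reducta can be plugged in independently without port collisions; this is precisely what rules out degenerate coincidences between freshly created loop ports of the two reductions.
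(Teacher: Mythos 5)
Your proposal is correct and follows essentially the same route as the paper: both hinge on the simultaneous decomposition $R = \context{R_0}{I} \cglue \context{(\alpha_1(R_{r,1}) \dglue \alpha_2(R_{r,2}))}{\alpha_1(I_{r,1})\alpha_2(I_{r,2})}$, justified by disjointness of distinct redexes, followed by associativity of gluing (Proposition~\ref{prop:assoc_comp_reseau} / Corollary~\ref{cor:associativity_gluing}) and the port hypothesis to keep the two reducta disjoint. The only cosmetic difference is that the paper substitutes both redexes in one step inside the parallel composition and checks the result with a single appeal to associativity, whereas you re-bracket twice to perform the reductions sequentially.
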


\begin{proof}
    For $i=1,2$, set $\mathcal{R}_i = 
    (\context{R_{r,i}}{I_{r,i}}, \context{R_{p,i}}{I_{p,i}})$.
    The shape of redexes allow us to assert that if they are distinct
    then they are disjoint. As $R$ contains both a redex $\alpha_1(R_{r,1})$ 
    and a redex $\alpha_2(R_{r,2})$, then
    we can deduce that $\alpha_1(R_{r,1}) \dglue \alpha_2(R_{r,2}) \subseteq 
    R$. More precisely we have 
    $$R = \ccglue{(\alpha_1(R_{r,1}) \dglue 
    \alpha_2(R_{r,2}))}{\alpha_1(I_{r,1}) \alpha_2(I_{r,2})}{R_0}{I}$$
    We get 
    $$R_1 = \ccglue{(\beta_1(R_{p,1}) \dglue 
    \alpha_2(R_{r,2}))}{\beta_1(I_{p,1}) \alpha_2(I_{r,2})}{R_0}{I}$$
    for a renaming $\beta_1$, and the same kind of expression for $R_2$. 
    It is straightforward to check that the net 
    $$R' = \ccglue{(\beta_1(R_{p,1}) \dglue 
    \beta_2(R_{p,2}))}{\beta_1(I_{p,1}) \beta_2(I_{p,2})}{R_0}{I}$$
    satisfies the conclusion by applying 
    proposition~\ref{prop:assoc_comp_reseau}. 
    The very existence of this net relies on 
    the disjointness of the $\beta_i(R_{p,i})$ which is ensured by the
    hypothesis on ports contained in both $R_1$ and $R_2$.
\end{proof}

\begin{cor}
    Let $\mathcal{L}$ be a set of rules such that for any pair of
    symbols there is at most one rule over them. The reduction
    $\xrightarrow{\mathcal{L}} = \bigcup_{\mathcal{R}\in\mathcal{L}}
    \xrightarrow{\mathcal{R}}$ is strongly confluent up to a renaming.
\end{cor}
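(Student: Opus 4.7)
The plan is to reduce the corollary to Proposition~\ref{prop:strong_confluent}. Given $R_1 \xleftarrow{\mathcal{R}_1} R \xrightarrow{\mathcal{R}_2} R_2$, I first dispose of the case where both reductions fire on the same redex. The uniqueness assumption on $\mathcal{L}$ forces $\mathcal{R}_1 = \mathcal{R}_2$ in that situation, so $R_1$ and $R_2$ differ only by the arbitrary renaming $\beta$ used to introduce fresh ports for the reductum $R_p$; they therefore agree up to renaming and the diagram closes trivially by taking $R'$ to be either one of them.

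In the remaining case the two redexes are distinct. Here the key observation is structural: a redex consists of exactly two cells joined principal-to-principal, so two distinct redexes cannot share any cell and, because each port of $R$ belongs to at most one cell, cannot share any port either. Hence the two redexes sit as port-disjoint subnets of $R$, which is exactly what Proposition~\ref{prop:strong_confluent} needs about the left-hand side of the diagram.

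The remaining hypothesis of Proposition~\ref{prop:strong_confluent} — that every port shared by $R_1$ and $R_2$ already appears in $R$ — has to be arranged by a judicious choice of the renamings $\beta_i$. The relation $\xrightarrow{\mathcal{R}_i}$ is parametrised precisely by such a $\beta_i$, so I may pick $\beta_1$ so that the fresh ports of $\beta_1(R_{p,1}) \setminus P(R)$ avoid the carrier of $R_2$, and symmetrically for $\beta_2$; this is always possible since all carriers are finite. Proposition~\ref{prop:strong_confluent} then directly provides a common reduct $R'$ with $R_1 \xrightarrow{\mathcal{R}_2} R' \xleftarrow{\mathcal{R}_1} R_2$. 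Because the $\beta_i$ are only determined up to the names of the fresh ports, $R'$ is itself only determined up to renaming, which is exactly the meaning of strong confluence up to renaming. The main subtlety in the argument is the observation that the rigid shape of redexes converts overlapping rule applications into genuinely disjoint ones; once this is noticed, the rest is renaming bookkeeping inherited from Proposition~\ref{prop:strong_confluent}.
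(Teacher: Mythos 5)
Your argument is correct and follows essentially the same route as the paper: the corollary is obtained directly from Proposition~\ref{prop:strong_confluent}, with the same-redex case dismissed via the one-rule-per-symbol-pair hypothesis and the port-disjointness hypothesis arranged by choosing the renamings $\beta_i$ with fresh ports, which is precisely what the paper's remark following the corollary describes as the meaning of ``up to a renaming''.
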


By \emph{up to a renaming} we mean that we might have to rename
one of the nets in a critical pair before joining them. This is due to the
disjointness condition in proposition~\ref{prop:strong_confluent}. Remark
that we can always substitute one of the branch of the critical pair by
another instance of the same rule on the same redex in such a way
that this condition is ensured.

\subsection{Example}
\label{sec:example_mll}
We will now give a thorough example of a net reduction using the Multiplicative Linear Logic symbols and rules.
We display representations next to the net definitions.\footnote{Nevertheless, these representations are not required to 
do the reduction, they are merely here to help the reader.}

Let us consider the rule $\mathcal{R} = (\context{R_r}{I_r},\context{R_p}{I_p})$ where 
$$R_r = \left(\begin{array}{l} 
        (0\ 3)(6\ 1)(7\ 2)(8\ 4)(9\ 5), \\
        (\point{0}\ 1\ 2)_{\Par} (\point{3}\ 4\ 5)_{\Times} 
\end{array} \right) 
\quad
\includetikzinline{reduce_ex_1}
\quad I_r = 6 < 7 < 8 < 9$$

$$R_p = \left(\begin{array}{l} 
        (10\ 12)(11\ 13), \\
        0
\end{array} \right) 
\quad
\includetikzinline{reduce_ex_2}
\quad I_p = 10 < 11 < 12 < 13$$
Now let $R$ be the net
$$\left(\begin{array}{l} 
        (0\ 3)(1\ 2)(8\ 4)(7\ 5)(6\ 9), \\
        (\point{0}\ 1\ 2)_{\Par} (\point{3}\ 4\ 5)_{\Times} (\point{6}\ 7\ 8)_{\Par}
\end{array} \right)
\quad
\includetikzinline{reduce_ex_3}
$$
It can be expressed as 
$$
\left(\!\begin{array}{l} 
        (6\ 9)(7\ 10)(8\ 11)(12\ 13), \\
        (\point{6}\ 7\ 8)_{\Times}
\end{array}\!\right)^{\!10 < 11 < 12 < 13} \cglue
\left(\!\begin{array}{l} 
        (0\ 3)(16\ 1)(17\ 2)(15\ 4)(14\ 5), \\
        (\point{0}\ 1\ 2)_{\Par} (\point{3}\ 4\ 5)_{\Times} 
\end{array}\!\right)^{\!14 < 15 < 16 < 17}
$$
$$
\includetikzinline{reduce_ex_4}
\quad \cglue \quad
\includetikzinline{reduce_ex_5}
$$
the latter context being a renaming of $\context{R_r}{I_r}$, which we substitute with the following
renaming of $\context{R_p}{I_p}$:
$$\left(\begin{array}{l} 
        (14\ 16)(15\ 17), \\
        0
\end{array} \right)^{14 < 15 < 16 < 17}
\quad
\includetikzinline{reduce_ex_6}
$$
Thus, we get the net
$$
\left(\begin{array}{l} 
        (6\ 9)(7\ 10)(8\ 11)(12\ 13), \\
        (\point{6}\ 7\ 8)_{\Times}
\end{array} \right)^{10 < 11 < 12 < 13} \cglue
\left(\begin{array}{l} 
        (14\ 16)(15\ 17), \\
        0
\end{array} \right)^{14 < 15 < 16 < 17}$$
which simplifies into
$$
\left(\begin{array}{l} 
        (6\ 9)(7\ 8), \\
        (\point{6}\ 7\ 8)_{\Times}
\end{array}\right)
\quad
\includetikzinline{reduce_ex_7}
$$

\section{Interaction nets are the \texorpdfstring{$\Ex$}{Ex}-collapse of Axiom/Cut nets}
We introduce now a notion of nets lying between proof-nets of multiplicative
linear logic and interaction nets. When we plug directly two interaction nets
a complex process of wire simplification occurs. When we plug two proof-nets
we only add special wires called \emph{cuts} and we have an external
notion of reduction performing such simplification. In this section we define 
nets with two kinds of wires: \emph{axioms} and \emph{cuts}. Those nets
allow us to give a precise account of the folklore assertion that interaction
nets are a quotient of multiplicative proof-nets.
\subsection{Definition and juxtaposition}
\begin{defi}
    %\label{defi:reseaux_interaction}
    An \emph{Axiom/Cut net}, \emph{AC net} for short,
    is a tuple $R = (\sigma_A, \sigma_C,\sigma_c)$ where:
    \begin{enumerate}[$\bullet$]
        \item $\sigma_A$ and $\sigma_C$ 
            are $w$-permutations of 
            finite domain such that
             $\dom(\sigma_C) \subseteq \dom(\sigma_A)$, 
             $\sigma_C$ has no fixed points and 
                    if $(a\ b)$ is an orbit of $\sigma_C$ then there exists
                    $c\neq a$ and $d\neq b$ such that $(c\ a)$ and $(b\ d)$ are
                    orbits of $\sigma_A$.

            We write $P_l(R)$ for the fixed points of $\sigma_A$ and 
            $P(R) = \dom(\sigma_A) - \dom(\sigma_C) - P_l(R)$.
        \item $\sigma_c$ is an element of 
            $\mathfrak{S}(P_c(R))$, where $P_c(R) \subseteq P(R)$,
            has pointed orbits and is labelled by $\mathcal{S}$ in such a way
            that $\forall o\in\Orbs(\sigma_c), |o|=\alpha(l(o))$ where 
            $l$ is the labelling function.
    \end{enumerate}
\end{defi}
\noindent The orbits of $\sigma_C$, called \emph{cuts},
are some kind of undirected unary cells linking orbits of $\sigma_A$, 
called \emph{axioms}.

We directly adapt the representation of interaction nets to AC nets by
displaying $\sigma_c$ as double edges. For example the AC net $R = 
(\sigma_A, \sigma_C, \sigma_c)$ with
$$\sigma_A = (1\ 2)(3\ 4)(5\ 6), \sigma_C = (2\ 3), \sigma_c = (\point{4}\ 5)_S$$
will be represented by
\begin{ctp}
    \inetcell(c){$S$}[R]
    \coordinate (cap) at ($(c.pal)!2!(c.above pal)$);
    \coordinate (cp) at ($(c.pal)!4!(c.above pal)$);
    \coordinate (cpp) at ($(c.pal)!6!(c.above pal)$);
    \coordinate (cam) at ($(c.middle pax)!2!(c.above middle pax)$);
    \draw (cap) edge[bend right,double,thick] (cp);
    \inetwiredraw (cp) -- (cpp) (c.pal) -- (cap) (cam) -- (c.middle pax);

    \inetport(c.pal)(c.middle pax)(cp)(cpp)(cam)(cap)

%    \node[below left] at (c.middle pax) {\small$5$};
%    \node[below right] at (c.pal) {\small$4$};
%    \node[above] at (cam) {\small$6$};
%    \node[above] at (cap) {\small$3$};
%    \node[above] at (cp) {\small$2$};
%    \node[above] at (cpp) {\small$1$};
\end{ctp}

We can adapt most of the previous definitions for those nets, most importantly
free ports, interfaces and contexts.
The nice thing about AC nets is that they yield a very simple composition.
\begin{defi}
    Let $\context{R}{I} = (\sigma_A, \sigma_C, \sigma_c)$ and 
    $\context{R'}{I'} = (\tau_A, \tau_C, \tau_c)$ be two contexts on AC nets
    with disjoint carriers, with $I = i_1 > \dots > i_n$ and 
    $I' = i'_1 > \dots > i'_n$.

    We call \emph{juxtaposition of $\context{R}{I}$ and $\context{R'}{I'}$}
    the AC net 
    $$\context{R}{I} \ACglue \context{R'}{I'} = (\sigma_A \permplus \tau_A,
    \sigma_C \permplus \tau_C \permplus 
    (i_1\ i'_1) \dots (i_n\ i'_n), \sigma_c \permplus \tau_c)$$
\end{defi}

The juxtaposition is from the logical point of view a generalized cut, and its
interpretation in terms of permutation is exactly the definition made by
Girard in~\cite{Girard87c}.
    
\subsection{\texorpdfstring{$\Ex$}{Ex}-collapse}
\begin{prop}
    Let $R = (\sigma_A, \sigma_C, \sigma_c)$ be an AC net
    and $f$ be a partial injection such that
    $\dom(\sigma_C) = \dom(f)$ and $\codom(f) \cap \dom(\sigma_A) = \emptyset$.

    The couple $(\sigma_A \cexece{f} f \circ \sigma_C \circ f^\star, \sigma_c)$,
    is an interaction net.

    It does not depend on $f$ and we call it the $\Ex$-collapse of $R$, 
    noted $\exc{R}$.
\end{prop}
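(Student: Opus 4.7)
The plan is to verify that the candidate couple satisfies Definition~\ref{defi:reseaux_interaction}, and then to argue $f$-independence.

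Setting $\tau = f \circ \sigma_C \circ f^\star$, I would first observe that $\tau$ is the conjugate of the $w$-permutation $\sigma_C$ by the bijection $f^\star : \codom(f) \to \dom(\sigma_C)$, hence itself a $w$-permutation with $\dom(\tau) = \codom(f)$. The hypotheses $\dom(f) = \dom(\sigma_C) \subseteq \dom(\sigma_A)$ and $\codom(f) \cap \dom(\sigma_A) = \emptyset$ ensure that $\sigma_A$ and $\tau$ are disjoint and that the $\Ex_0$-composition $\sigma_A \cexec{f} \tau$ is defined; Proposition~\ref{prop:wperm_closed} then gives that it is a $w$-permutation. The full composition $\sigma_A \cexece{f} \tau$ only adjoins a finite set $L_f$ of fixed points (the double orbit representatives), so the first component remains a $w$-permutation of finite domain. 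Because $\sigma_C$ has no fixed points, the original loops $P_l(R)$ lie outside $\dom(f)$ and survive the execution untouched; the fixed points of $\sigma_A \cexece{f} \tau$ are therefore exactly $P_l(R) \uplus L_f$.

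Next I would check the cell condition. Since $P_c(R) \subseteq \dom(\sigma_A) - \dom(\sigma_C) - P_l(R)$, every cell port lies in the carrier $(\dom(\sigma_A) - \dom(\sigma_C)) \uplus L_f$ and avoids all fixed points, so $\sigma_c$ is a partial permutation of the non-loop ports of $\sigma_A \cexece{f} \tau$; the pointing and labelling conditions transfer unchanged from the AC net.

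For independence of $f$, the key algebraic identity is $f^\star \tau f = \sigma_C$ on $\dom(\sigma_C)$, which reduces the iterated execution formula to alternating applications of $\sigma_A$ and $\sigma_C$. Unfolding the definition of $\Ex$ gives, for every $x \in \dom(\sigma_A) - \dom(\sigma_C)$, that $(\sigma_A \cexece{f} \tau)(x)$ is the first element outside $\dom(\sigma_C)$ in the sequence $\sigma_A(x), \sigma_A \sigma_C \sigma_A(x), \ldots$, an expression that does not mention $f$. Similarly, double orbits correspond bijectively to the $(\sigma_C \sigma_A)$-cycles wholly contained in $\dom(\sigma_C)$, a datum depending only on $(\sigma_A,\sigma_C)$. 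The main obstacle is that the actual names of the loop representatives in $L_f$ live in $\codom(f)$, and so literally depend on $f$; the statement must therefore be read up to internal renaming of loop ports, which is semantically harmless by the discussion following the morphism definition. With that convention, replacing $f$ by another admissible $g$ yields an internally isomorphic interaction net, witnessed by the identity on cell and free ports together with any bijection $L_f \to L_g$.
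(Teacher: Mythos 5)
Your proof is correct and, on the two substantive points, follows the same route as the paper: the only non-trivial clause of Definition~\ref{defi:reseaux_interaction} is that $\sigma_A \cexece{f} f\sigma_C f^\star$ is a $w$-permutation, which both you and the paper obtain from Proposition~\ref{prop:wperm_closed}; and independence of $f$ is argued in both cases from the fact that every occurrence of $f$ in the execution formula is immediately cancelled by $f^\star$, so each composite collapses to an alternation of $\sigma_A$ and $\sigma_C$ (the paper states this as the general identity $\sigma_n\tau_n\cdots\sigma_1\tau_1 = \sigma_n f_n^\star f_n \tau_n \cdots$). Your verification of the cell clause and of the fate of the original loops is more explicit than the paper's one-line remark, but not different in substance.

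Where you genuinely diverge is the treatment of the loop representatives, and your observation there is valid: the full composition $\cexece{f}$ adjoins, for each double orbit $(O,O')$, the fixed point $\min(O\cup O')$, and since $O'\subseteq\codom(f)$ this minimum can land in $\codom(f)$ and hence literally depends on the choice of $f$ (nothing prevents $\codom(f)$ from containing integers smaller than every element of $\dom(\sigma_C)$). The paper's proof is silent on this and its claim of strict $f$-independence is, read literally, false for the names of the created loop ports; only the set of double orbits, and hence the number of loops, is canonical. Your fix --- reading the statement up to (internal) renaming of loop ports, or equivalently fixing the convention that the representative is $\min O$ with $O\subseteq\dom(\sigma_C)$ --- is the right repair and costs nothing, since loops lie outside $P(R)$ and so outside the domain of net morphisms anyway. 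This is a refinement of the paper's argument rather than a flaw in yours.
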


\noindent  For the definition of the $\Ex$-composition to be correct, we have to
delocalize $\sigma_C$ to a domain disjoint from $\dom(\sigma_A)$. 
The $\Ex$-collapse amounts to replace any maximal chain 
$a_1 \xrightarrow{\sigma_A} b_1 \xrightarrow{\sigma_C} a_2 \dots
%a_{n-1} \xrightarrow{\sigma_C} b_{n-1} 
b_{n-1} \xrightarrow{\sigma_A} a_n$
by a chain 
$a_1 \xrightarrow{\sigma_A} b_1 \xrightarrow{f} f(b_1) 
\xrightarrow{f \circ \sigma_C \circ f^\star} f(a_2)
\xrightarrow{f^\star} \dots
%a_{n-1} \xrightarrow{\sigma_C} 
b_{n-1} \xrightarrow{\sigma_A} a_n$
and then to compute the $\Ex$-composition to get
$a_1 \xrightarrow{\sigma_A \cexece{f} \sigma_C} a_n$.

\begin{proof}
    Remark that for this to be an interaction net, the only property to
    be checked which is not a direct consequence of the definition of AC
    nets is the fact that
    $\sigma_A \cexece{f} f \circ \sigma_C \circ f^\star$ is a $w$-permutation,
    but this comes directly from proposition~\ref{prop:wperm_closed}.

    This remark asserts that $f$ as only a shallow role in the definition.
    Indeed, every time $f$ is applied in the $\Ex$-composition, it is
    followed by an application of its inverse.
    Moreover, for partial injections
    $\sigma_1, \tau_1, \dots, \sigma_n, \tau_n$, we have
    $$\sigma_n \circ \tau_n \circ \dots \circ \sigma_1 \circ \tau_1 = 
    \sigma_n \circ f_n^\star \circ f_n \circ \tau_n \circ g_n^\star
    \circ g_n \circ \dots \circ g_1^\star \circ g_1 \circ \sigma_1
    \circ f_1^\star \circ f_1 \circ \tau_1$$
    for every partial injections $f_1,g_1,\dots,f_n,g_n$ such that
    $\dom(f_i) \subseteq \codom(\tau_i) \cap \dom(\sigma_i)$
    and $\dom(g_i) \subseteq \codom(\sigma_i) \cap \dom(\tau_i)$
\end{proof}

\begin{prop}
    For each interaction net $R$ there exists a unique AC net $R'$ of the form
    $(\sigma_A, 0, \sigma_c)$ such that $\exc{R'} = R$. $R'$ is said to be
    \emph{cutfree}.
\end{prop}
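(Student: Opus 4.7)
The plan is to take $R'$ to be literally $(\sigma_w, 0, \sigma_c)$ reading $R = (\sigma_w, \sigma_c)$ as an AC net whose cut permutation is empty, and to verify both that this is a legal cutfree AC net and that its $\Ex$-collapse is $R$. Existence then follows immediately, and uniqueness comes from running the $\Ex$-collapse formula backwards in the degenerate case where there are no cuts.

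For existence I would first check the AC net axioms. Since $\sigma_C = 0$, the constraint $\dom(\sigma_C) \subseteq \dom(\sigma_A)$ and the flanking-axiom condition are both vacuously satisfied; the remaining conditions on $\sigma_w$ and $\sigma_c$ transport directly from the interaction-net definition. In particular $P_l(R') = P_l(R)$, $P_c(R') = P_c(R)$, and the pointing/labelling/arity data on $\sigma_c$ are unchanged. Next I would compute $\exc{R'}$. The definition requires an $f$ with $\dom(f) = \dom(\sigma_C) = \emptyset$ and hence $f = 0$, so $\exc{R'} = (\sigma_w \cexece{0} 0,\ \sigma_c)$. Unfolding the $\Ex_0$-composition gives $\Ex(\sigma_w + 0,\ 0 + 0^\star) = \Ex(\sigma_w, 0)$, and inspection of the defining series $\Ex_n(\sigma_w, 0) = \frestr{\sigma_w}{A}{C} + \frestr{(\sigma_w(0\cdot \sigma_w)^k)}{A}{C} + \cdots$ shows every term past the zeroth vanishes because $0 \cdot \sigma_w = 0$, leaving $\sigma_w$ itself. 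Finally, $\dom(\sigma_w) \cap \dom(f) = \emptyset$, so there are no double orbits to recover and the full $\Ex$-composition coincides with the $\Ex_0$-composition. Thus $\exc{R'} = (\sigma_w, \sigma_c) = R$.

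For uniqueness, suppose $R'' = (\sigma_A, 0, \sigma_c')$ is any cutfree AC net with $\exc{R''} = R$. Exactly the same calculation applied to $R''$ in place of $R'$ yields $\exc{R''} = (\sigma_A, \sigma_c')$. Equating this with $(\sigma_w, \sigma_c)$ componentwise forces $\sigma_A = \sigma_w$ and $\sigma_c' = \sigma_c$, so $R'' = R'$.

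The whole argument is essentially bookkeeping: the only step that deserves care — and which I expect to be the main place where a reader may want more detail — is confirming that with $g = 0$ the $\Ex$-series really is stationary from $n = 0$, and that the double-orbit enrichment from $\cexec{}$ to $\cexece{}$ produces nothing new because $\dom(\sigma_A) \cap \dom(f)$ and $\codom(\sigma_A) \cap \codom(f)$ are both empty. Both points follow immediately from the definitions, but are what make the proposition true.
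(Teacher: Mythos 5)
Your proof is correct and follows the same route as the paper's: take $R' = (\sigma_w, 0, \sigma_c)$ and observe that $\sigma \cexece{0} 0 = \sigma$, which gives both existence and (by the same computation applied to an arbitrary cutfree preimage) uniqueness. The paper's own proof is a two-line version of exactly this argument; your extra verification that the $\Ex$-series is stationary at $n=0$ and that no double orbits arise is the correct justification of that identity.
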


\begin{proof}
    If $R = (\tau_w,\tau_c)$ we only have to take $R'= (\tau_w, 0, \tau_c)$.
    Uniqueness comes from the fact that $\sigma \cexece{0} 0
    = \sigma$.
\end{proof}

\begin{defi}
    Let $R$ and $R'$ be two AC nets, we say that 
    \emph{$R$ and $R'$ are $\Ex$-equivalent}, noted $R \exequiv R'$ when
    $\exc{R} = \exc{R'}$.
\end{defi}

We have an obvious correspondence between juxtaposition and gluing.
\begin{prop}
    $\exc{\context{R}{I} \ACglue \context{R'}{I'}}
    = \context{\exc{R}}{I} \cglue \context{\exc{R'}}{I'}$
\end{prop}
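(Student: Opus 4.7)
The plan is to unfold both sides as $\Ex$-compositions over a common pool of basic data and then invoke associativity of execution. The cell parts agree trivially: both the juxtaposition and the gluing produce $\sigma_c \permplus \tau_c$, and $\exc{}$ leaves cells untouched.

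For the wire parts, write $R=(\sigma_A,\sigma_C,\sigma_c)$, $R'=(\tau_A,\tau_C,\tau_c)$, $\gamma=\cord{I}{I'}$, and $\gamma'=(i_1\ i'_1)\cdots(i_n\ i'_n)$. Choose partial injections $f_\sigma, f_\tau, f_I$ of domains $\dom(\sigma_C), \dom(\tau_C), I\cup I'$ with pairwise disjoint codomains, all disjoint from $\dom(\sigma_A)\cup\dom(\tau_A)$. Since $\exc{}$ does not depend on the chosen partial injection, I would use $f_\sigma$ for $\exc{R}$, $f_\tau$ for $\exc{R'}$, and $f_\sigma \permplus f_\tau \permplus f_I$ for the $\exc{}$ of the juxtaposition. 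Setting $\rho_\sigma = f_\sigma\sigma_C f_\sigma^\star$, $\rho_\tau = f_\tau \tau_C f_\tau^\star$, and $\rho_I = f_I \gamma' f_I^\star$, the wires unfold to $(\sigma_A \permplus \tau_A)\cexece{f_\sigma + f_\tau + f_I}(\rho_\sigma \permplus \rho_\tau \permplus \rho_I)$ on the LHS and $(\sigma_A \cexece{f_\sigma} \rho_\sigma)\cexece{\gamma}(\tau_A \cexece{f_\tau} \rho_\tau)$ on the RHS.

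Two applications of proposition~\ref{prop:exec sum} to merge summands and theorem~\ref{theoreme:assoc_exec} to flatten nested executions rewrite the $\Ex_0$-part of the RHS as $\Ex(\sigma_A + \tau_A + \rho_\sigma + \rho_\tau,\; f_\sigma + f_\sigma^\star + f_\tau + f_\tau^\star + \gamma + \gamma^\star)$, while the $\Ex_0$-part of the LHS is $\Ex(\sigma_A + \tau_A + \rho_\sigma + \rho_\tau + \rho_I,\; f_\sigma + f_\sigma^\star + f_\tau + f_\tau^\star + f_I + f_I^\star)$. The two are identified through the key observation that the detour $(f_I, \rho_I, f_I^\star)$ internalizes the interface swap: since $f_I^\star\circ f_I = \id_{I\cup I'}$, one has $f_I^\star\circ\rho_I\circ f_I = \gamma' = \gamma + \gamma^\star$, so any chain entering $I\cup I'$ and taking the detour emerges at its $\gamma$-partner exactly as if $\gamma+\gamma^\star$ had been applied directly. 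Factoring out $\Ex(\cdot + \rho_I, f_I + f_I^\star)$ by one more instance of theorem~\ref{theoreme:assoc_exec} makes this equality formal on the common carrier $\dom(\sigma_A)\cup\dom(\tau_A) - (I\cup I')$.

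For the additional fixpoints contributed by the full $\cexece{}$-composition from double orbits, the same reshuffling applies: such fixpoints correspond to cycles of the combined graph with vertices $\{\sigma_A, \tau_A, \rho_\sigma, \rho_\tau\}$ and edges $\{f_\sigma, f_\tau, \gamma\}$, and corollary~\ref{cor:assoc_comp_exec} guarantees their invariance under regrouping of compositions. The wire parts thus agree as $w$-permutations, completing the equality. The main obstacle is the bookkeeping around the auxiliary delocalization $f_I$: once one verifies that the execution of the LHS against $(f_I, \rho_I, f_I^\star)$ realizes exactly the chord $\gamma+\gamma^\star$, the remainder is a routine packaging of associativity.
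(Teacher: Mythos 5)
Your proposal is correct and follows essentially the same route as the paper: unfold both sides as flattened $\Ex$-compositions via Proposition~\ref{prop:exec sum} and Theorem~\ref{theoreme:assoc_exec}, observe that the delocalized interface cut $(f_I,\rho_I,f_I^\star)$ realizes exactly the chord $\cord{I}{I'}\permplus\cord{I}{I'}^\star$, and defer the double-orbit bookkeeping to the argument of Proposition~\ref{cor:assoc_comp_exec}. The paper's own proof is just a terser version of this, writing your $f_I$ as $h=i+i'$ and stating the same reduction to a single equality of $\Ex$-compositions.
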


\begin{proof}
   We set $R = (\sigma_A, \sigma_C, \sigma_c)$, $R' = (\tau_A,\tau_C,\tau_c)$.

    If we write $f$ (resp. $g$) the partial injection used in the computation of
    $\exc{R}$ (resp. $\exc{R'}$), then we can find a partial injection $h$
    such that the partial injection used in the computation of
    $\exc{\context{R}{I} \ACglue \context{R'}{I'}}$ is
    $f+g+h$. Moreover, we can decompose $h=i+i'$ in such a way that
    $h(\cord{I}{I'}\permplus\cord{I}{I'}^\star)h^\star
    = i \cord{I}{I'} i^\star \permplus i' \cord{I}{I'}^\star i'^\star$.

    The main part of the proposition amounts to proving that
    \begin{multline*}
    (\sigma_A\permplus\tau_A) \cexece{f+g+i+i'}
    (f\sigma_C f^\star \permplus g\tau_C g^\star \permplus 
    i \cord{I}{I'} i^\star \permplus i' \cord{I}{I'}^\star i'^\star)
    = \\
    (\sigma_A \cexece{f} f\sigma_C f^\star) \cexece{\cord{I}{I'}}
    (\tau_A \cexece{g} g \tau_C g^\star)
    \end{multline*}
    This equality can be deduced as in the proof of proposition~\ref{cor:assoc_comp_exec}.
    The fact that we have extra partial injections $f,g,i$ and $i'$ does not add
    any new difficulty.
\end{proof}

Therefore we can claim that
\begin{center}
    Interaction nets are the quotient of AC nets by $\exequiv$.
\end{center}

\section{Reduction by means of double pushout}
\subsection{Motivation}
In this section we briefly recall the double pushout approach of graph rewriting
and why we seek such kind of approach in our context.

We consider as rule of graph rewriting a diagram $R \leftarrow I \rightarrow S$
in $\cat{Graph}$, the category of graphs. The graph $I$ corresponds
to some sort of common interface between $R$ and $S$. As soon as we have
a morphism $R \rightarrow G$ we say that $G$ contains the redex of the rule, 
and we can construct in $\cat{Graph}$ a graph $G'$ such that we have a
pushout
    \begin{ctp}
        \matrix[matrix of math nodes,column sep=.3cm,row sep=.3cm]{
        |(Rr)| R &  & |(Ri)| I  \\
        & \text{po}   \\
        |(R)| G &  & |(S)| G' \\
        };
        \draw (Ri) edge[->] (Rr);

        \draw (Ri) edge[->]  (S);
        \draw (Rr) edge[->]  (R);

        \draw (S) edge[->]  (R);
    \end{ctp}
    A precise definition of pushout will be given later, but for now let us say that it
    corresponds to extracting $R$ from the graph $G$ while leaving the common part $I$.
    We can construct another pushout in the other direction, thus obtaining the
    diagram:
    \begin{ctp}
        \matrix[matrix of math nodes,column sep=.3cm,row sep=.3cm]{
        |(Rr)| R &  & |(Ri)| I & & |(Rp)| S \\
        & \text{po} & & \text{po} & \\
        |(R)| G &  & |(S)| G' & & |(T)| G_r \\
        };
        \draw (Ri) edge[->] (Rr);
        \draw (Ri) edge[->] (Rp);

        \draw (Ri) edge[->]  (S);
        \draw (Rr) edge[->]  (R);
        \draw (Rp) edge[->]  (T);

        \draw (S) edge[->]  (R);
        \draw (S) edge[->]  (T);
    \end{ctp}
    The graph $G_r$ is then called the reduct of $G$ by the rule.
    It is constructed by taking the graph $G'$ and replacing by $S$ the part
    left empty by the removing of $R$ in $G$, and then applying some
    kind of gluing operation along the interface $I$.

    This approach, initiated in the seminal paper~\cite{Ehrigh73}, leads
    to a definition of graph reduction which is at the same time
    intuitive and algebraically rigorous. It is quite natural to
    try to define it for interaction nets. Indeed, cutting and gluing
    are explicit operations in our framework.

    Note that such kind of approach for interaction nets is defined in the 
    paper~\cite{Banach95}, but it relies on an embedding of interaction nets in
    hypergraphs followed by an embedding of hypergraphs in bipartite graphs.
    In our setting, we can directly state the approach while staying in the
    realm of interaction nets.

\subsection{Pushouts in \texorpdfstring{$\cat{IN}$}{cat(N)}}
Let $\IN$ be the category whose objects are interaction nets and morphisms
are morphisms of interaction nets.

In this section we write $R \xinjarrow{f} S$ to say that
$f$ is an almost injective morphism from $R$ to $S$.
We write $R \xisoarrow{f} S$ when $f$ is a bijection.

\iffalse
For every morphism $R \xinjarrow{f} S$ we write $\bij{f}$ for the co-restriction
of $f$ to $f(R)$, thus we have $R \xisoarrow{\bij{f}} f(R)$. $\bij{f}$ is then
a renaming.

\begin{lem}
    \label{lemme:inj inc}
    If $R \xinjarrow{f} S$, then $\bij{f}(R) \subseteq S$.
    
    Reciprocally, if $\alpha(R) \subseteq S$ for a renaming $\alpha$,
    then $R \xinjarrow{\widehat{\alpha}} S$ where $\widehat{\alpha}$ is
    the co-extension of $\alpha$ to the carrier of $S$.
\end{lem}

In terms of diagrams, this amount to say that whenever we have a morphism
$R \xinjarrow{f} S$ then we can find an iso $f'$ and a net
$S'\subseteq S$ such that the diagram
    \begin{ctp}
        \matrix[matrix of math nodes,column sep=.3cm,row sep=.5cm]{
        & |(Sp)| S' & \\
        |(R)| R & & |(S)| S \\
        };
        \draw (R) edge[left hook->] node[below]{$f$} (S);
        \draw (R) edge[->] node[above]{$f'$} 
                           node[below,sloped]{$\sim$} (Sp);
        %\node at ($(Sp)!.5!(S)$) {$\subseteq$};
        \draw (Sp) edge[white,->] node[black,sloped]{$\subseteq$} (S);
    \end{ctp}
commutes. Reciprocally, when we have the top part of the diagram, we can find an injection
$f$ such that the diagram commutes.

\begin{proof}
    The first assertion comes from the definition of the image net $f(R)$.
    The second one is trivial.
\end{proof}
\fi

We recall here the definition of pushouts.
\begin{defi}
    Let $\cat{C}$ be a category. A commutative square
    \begin{itp}
        \matrix[matrix of math nodes,column sep=.3cm,row sep=.3cm]{
        & |(R)| R & \\
        |(S)| S & & |(Sp)| S' \\
        & |(T)| T \\
        };
        \draw (R) edge[->] node[above]{$f$} (S);
        \draw (R) edge[->] node[auto]{$f'$} (Sp);
        \draw (S) edge[->] node[below]{$g$} (T);
        \draw (Sp) edge[->] node[auto]{$g'$} (T);
    \end{itp}
    is called a pushout whenever for any other commutative square
    \begin{itp}
        \matrix[matrix of math nodes,column sep=.3cm,row sep=.3cm]{
        & |(R)| R & \\
        |(S)| S & & |(Sp)| S' \\
        & |(T)| T' \\
        };
        \draw (R) edge[->] node[above]{$f$} (S);
        \draw (R) edge[->] node[auto]{$f'$} (Sp);
        \draw (S) edge[->] node[below]{$h$} (T);
        \draw (Sp) edge[->] node[auto]{$h'$} (T);
    \end{itp}
    there exists a unique 
     $T \xrightarrow{u} T'$ such that
    $u g = h$ and $u g' = h'$.
    
    We write \emph{po} in the center of a square to state that it is
    a pushout.
%
%    If we can complete every diagram
%    \begin{ctp}
%        \matrix[matrix of math nodes,column sep=.3cm,row sep=.3cm]{
%        & |(R)| R & \\
%        |(S)| S & & |(Sp)| S' \\
%        };
%        \draw (R) edge[->] node[above]{$f$} (S);
%        \draw (R) edge[->] node[auto]{$f'$} (Sp);
%    \end{ctp}
%    to form a pushout, we tell that $\cat{C}$ \emph{has all pushouts}.
\end{defi}

The following lemma asserts that pushouts are stable under iso of their
branches, \emph{i.e.} that we can replace every middle object of the pushout square with
an isomorphic one. It will be useful to replace almost injective morphisms
by inclusion maps.
\begin{lem}
    Let 
    \begin{itp}
        \matrix[matrix of math nodes,column sep=.3cm,row sep=.3cm]{
        & |(R)| R & \\
        |(S)| S &  \text{po} & |(Sp)| S' \\
        & |(T)| T \\
        };
        \draw (R) edge[->] node[above]{$f$} (S);
        \draw (R) edge[->] node[auto]{$f'$} (Sp);
        \draw (S) edge[->] node[below]{$g$} (T);
        \draw (Sp) edge[->] node[auto]{$g'$} (T);
    \end{itp}
    be a pushout square and $S \xisoarrow{} \widetilde{S}$ be an
    iso. We also have the following pushout
    \begin{itp}
        \matrix[matrix of math nodes,column sep=.3cm,row sep=.3cm]{
        & |(R)| R & \\
        |(S)| \widetilde{S} &  \text{po} & |(Sp)| S' \\
        & |(T)| T \\
        };
        \draw (R) edge[->] node[above]{$h$} (S);
        \draw (R) edge[->] node[auto]{$f'$} (Sp);
        \draw (S) edge[->]  node[below]{$k$} (T);
        \draw (Sp) edge[->] node[auto]{$g'$} (T);
    \end{itp} \qed
\end{lem}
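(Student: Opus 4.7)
The plan is to exploit the universal property of the original pushout together with the invertibility of the iso $\iota : S \xisoarrow{} \widetilde{S}$. This is a standard categorical manipulation that holds in any category and does not use anything specific to $\IN$.

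First I would make the two unnamed morphisms explicit: the only reasonable choice is $h = \iota \circ f$ and $k = g \circ \iota^{-1}$, since these are the unique arrows making the two new triangles commute. Commutativity of the outer square is immediate:
\[
k \circ h = g \circ \iota^{-1} \circ \iota \circ f = g \circ f = g' \circ f'.
\]

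Next I would verify the universal property. Suppose we are given a competing commutative square with apex $T'$ and morphisms $\widetilde{h} : \widetilde{S} \to T'$ and $h' : S' \to T'$ satisfying $\widetilde{h} \circ h = h' \circ f'$. Setting $j = \widetilde{h} \circ \iota : S \to T'$, we obtain a competing square for the \emph{original} pushout, since
\[
j \circ f = \widetilde{h} \circ \iota \circ f = \widetilde{h} \circ h = h' \circ f'.
\]
The universal property of the original pushout then yields a unique $u : T \to T'$ with $u \circ g = j$ and $u \circ g' = h'$. The equation $u \circ g' = h'$ is exactly what we need, and for the other equation we compute
\[
u \circ k = u \circ g \circ \iota^{-1} = j \circ \iota^{-1} = \widetilde{h} \circ \iota \circ \iota^{-1} = \widetilde{h}.
\]

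Finally, for uniqueness, if $u'$ is another mediating arrow for the new square, then $u' \circ k = \widetilde{h}$ forces $u' \circ g = u' \circ k \circ \iota = \widetilde{h} \circ \iota = j$, and together with $u' \circ g' = h'$ this makes $u'$ a mediator for the original pushout, so $u' = u$ by the uniqueness part of the original universal property. There is no serious obstacle here; the only thing to keep straight is which iso to invert where, and the proof is essentially bookkeeping.
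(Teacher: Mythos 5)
Your proof is correct and follows the same route the paper intends (the paper omits the proof as routine, but its choice of $h=\iota\circ f$ and $k=g\circ\iota^{-1}$ is exactly yours, and your verification of the universal property is the standard argument, valid in any category). Nothing to add.
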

\iffalse
\begin{proof}
    We set $S \xisoarrow{\alpha} \widetilde{S}$,
    $h=\alpha f$ and $k = g \alpha^{-1}$. The square is obviously
    commutative. Let 
    \begin{ctp}
        \matrix[matrix of math nodes,column sep=.3cm,row sep=.3cm]{
        & |(R)| R & \\
        |(S)| \widetilde{S} & & |(Sp)| S' \\
        & |(T)| T' \\
        };
        \draw (R) edge[->] node[above]{$h$} (S);
        \draw (R) edge[->] node[auto]{$f'$} (Sp);
        \draw (S) edge[->] node[below]{$l$} (T);
        \draw (Sp) edge[->] node[auto]{$l'$} (T);
    \end{ctp} be another commutative square, we ha
\end{proof}
\fi
\begin{lem}
    \label{pushout subnets}
    We have the pushout
    \begin{ctp}
    \matrix[matrix of math nodes,column sep=.3cm,row sep=.3cm]{
        & |(R)| R \\
        |(S)| R \glue{f} \underbar{S} &  \text{po} & |(Sp)| R \glue{g} \underbar{S}' \\
        & |(T)| R \glue{f + g} (\underbar{S} + \underbar{S}') \\
        };
        \draw (R) edge[white,->] node[black,sloped,xscale=-1]{$\subseteq$} (S);
        \draw (R) edge[white,->] node[black,sloped]{$\subseteq$} (Sp);
        \draw (S) edge[white,->] node[black,sloped]{$\subseteq$} (T);
        \draw (Sp) edge[white,->] node[black,sloped,xscale=-1]{$\subseteq$} (T);
    \end{ctp}
    whenever $S$ and $S'$ are disjoint and 
    $\dom(f) \cap \dom(g) = \emptyset$.
\end{lem}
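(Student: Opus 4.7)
The plan is to use associativity of gluing to identify the apex as a two-stage gluing, and then verify the universal property by tracing wires through the execution formula.

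First, by Proposition~\ref{prop:assoc_comp_reseau}, the hypotheses on $f,g$ and on $\underbar{S},\underbar{S}'$ yield
$$R \glue{f+g}(\underbar{S}+\underbar{S}') = (R \glue{f}\underbar{S}) \glue{g}\underbar{S}' = (R \glue{g}\underbar{S}') \glue{f}\underbar{S},$$
so the two downward arrows of the square are genuine subnet inclusions. Both composites $R \to R \glue{f+g}(\underbar{S}+\underbar{S}')$ are morphisms extending $\id_R$, and since the extension construction of the preceding subsection assigns to each free port $p$ of $R$ the far endpoint reached by following the wire of $p$ in the target, both composites coincide with the direct inclusion $R \subseteq R \glue{f+g}(\underbar{S}+\underbar{S}')$. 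This establishes commutativity.

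For the universal property, suppose $h : R \glue{f}\underbar{S} \to T'$ and $h' : R \glue{g}\underbar{S}' \to T'$ agree on $R$. I define $u$ as follows. Every cell of the apex comes from exactly one of $R$, $\underbar{S}$, or $\underbar{S}'$; set $u$ to agree with $h$ on cells from $R$ or $\underbar{S}$, and with $h'$ on cells from $\underbar{S}'$. Consistency on $R$-cells is exactly the hypothesis; the action of $u$ on cell ports is then forced by the cell-preservation equations of a morphism. On the remaining free ports of the apex, use $h$ on ports originating in $\underbar{S}$, $h'$ on ports originating in $\underbar{S}'$, and either one on free ports of $R$ not consumed by $f+g$ (the two values coincide).

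The crucial step is checking that $u$ preserves wires and loops. A wire of the apex corresponds, by the execution-formula definition of $\cexece{f+g}$, to a chain zig-zagging between the $\sigma_w$-structures of $R$, $\underbar{S}$, $\underbar{S}'$, alternating via $f$ and $f^\star$ or via $g$ and $g^\star$. Using Theorem~\ref{theoreme:assoc_exec} we may split such a chain at its visits to $\underbar{S}'$: each maximal $\underbar{S}'$-segment (together with the $R$-ports bracketing it) forms a wire of $R \glue{g}\underbar{S}'$, while the complementary segments form wires of $R \glue{f}\underbar{S}$, and the pieces meet at free ports of $R$ on which $h$ and $h'$ agree. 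Applying $h$ and $h'$ to the respective pieces and concatenating in $T'$ produces a wire from $u(p)$ to $u(p')$; loops are handled identically. This chain-splitting argument — essentially Proposition~\ref{prop:assoc_comp_reseau} transported through the morphisms into $T'$ — is the main obstacle. Uniqueness is then routine: any candidate morphism satisfying the pushout commutativities must agree with $u$ on cells (by the cell bijections) and on free ports (by wire preservation together with the prescribed values on $R$, $\underbar{S}$, $\underbar{S}'$), hence equals $u$.
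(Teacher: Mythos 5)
Your proof is correct and takes essentially the same route as the paper: the paper likewise builds $u$ piecewise (taking $h$ on ports coming from $R$ or $\underbar{S}$, $h'$ on those coming from $\underbar{S}'$, reconciled on the overlap by the commutativity hypothesis) and then asserts that $u$ is a morphism and is unique ``by construction''. Your explicit chain-splitting verification of wire preservation is a sound elaboration of what the paper leaves implicit --- just note that consecutive pieces are reconciled via the outer commutativity composed with the non-identity inclusion maps (a port in $\dom(f)$ survives as a free port of $R \glue{g} \underbar{S}'$ but not of $R \glue{f} \underbar{S}$), rather than by a literal equality $h(p)=h'(p)$ at a shared free port.
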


\begin{proof}
    Let 
    \begin{ctp}
    \matrix[matrix of math nodes,column sep=.3cm,row sep=.3cm]{
        & |(R)| R \\
        |(S)| R \glue{f} \underbar{S} &  & |(Sp)| R \glue{g} \underbar{S}' \\
        & |(T)| T \\
        };
        \draw (R) edge[white,->] node[black,sloped,xscale=-1]{$\subseteq$} (S);
        \draw (R) edge[white,->] node[black,sloped]{$\subseteq$} (Sp);
        \draw (S) edge[->] node[below]{$h$} (T);
        \draw (Sp) edge[->] node[below]{$h'$} (T);
    \end{ctp}
    be another commutative square. We will build a morphism $u$
    from $R \glue{f + g} (\underbar{S} + \underbar{S}')$
    to $T$.
    Let $p$ be a port belonging to $P(R) - \dom(f) - \dom(g)$ or
    $P(\underbar{S}) - \codom(f)$ %or $P(\underbar{S}') - \codom(g)$
    we just set $u(p) = h(p)$. Similarly we define $u(p) = h'(p)$
    when $p$ belongs to $P(\underbar{S}') - \codom(g)$.
    Now, if we take a $p \in \dom(f)$ we can properly define 
    its image $p'$ in $R \glue{f} \underbar{S}$. We set
    $u(p) = h(p') = h' (p)$. We proceed in the same way for a 
    $p \in \dom(g)$.

    By construction $u$ is unique and satisfies the required universal 
    property of pushouts.
\end{proof}

By using the two previous lemmas and the definition of almost injective morphisms, 
we get the following corollary.
\begin{cor}
    \label{theoreme:in sa}
    Let $S \xhookleftarrow{\alpha} R \xhookrightarrow{\beta} S'$ be a diagram in
    $\cat{IN}$ with $S,S'$ disjoint. By definition of almost injectivity we have
    $S = \widetilde{\alpha}(R) \glue{f} \underbar{S}$ and
    $S' = \widetilde{\beta}(R) \glue{g} \underbar{S}'$.

    If $\dom(f \widetilde{\alpha}) \cap \dom(g \widetilde{\beta}) = \emptyset$
    then we have the following diagram:
    \begin{ctp}
    \matrix[matrix of math nodes,column sep=.3cm,row sep=.7cm]{
        & |(R)| R \\
        |(S)| S &  \text{po} & |(Sp)| S' \\
        & |(T)| R \glue{f \widetilde{\alpha} + g \widetilde{\beta}} (\underbar{S} +
        \underbar{S}') \\
        };
        \draw (R) edge[left hook->] node[above left]{$\alpha$} (S);
        \draw (S) edge[left hook->] (T);%node[below left]{$g$} (T);
        \draw (R) edge[left hook->] node[above right]{$\beta$} (Sp);
        \draw (Sp) edge[left hook->] (T);%node[below right]{$g'$} (T);
    \end{ctp}
\end{cor}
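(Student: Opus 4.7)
The plan is to reduce the statement to the two preceding lemmas — stability of pushouts under iso of a branch, and the pushout of two gluings sharing a common subnet — by rephrasing the almost injective morphisms $\alpha$ and $\beta$ so that they both start from the \emph{same} copy $R$ of the interface.

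First I would unpack the hypothesis. By the definition of almost injectivity, $\alpha$ factors as the renaming $\widetilde{\alpha}: R \xisoarrow{} \widetilde{\alpha}(R)$ followed by the inclusion $\widetilde{\alpha}(R) \subseteq S = \widetilde{\alpha}(R) \glue{f} \underbar{S}$, and similarly $\beta$ factors through $\widetilde{\beta}$ and $S' = \widetilde{\beta}(R) \glue{g} \underbar{S}'$. Since $S$ and $S'$ are disjoint, so are $\underbar{S}$ and $\underbar{S}'$.

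Next I would invoke Lemma~\ref{pushout subnets} with the common subnet $R$ and the partial injections $f\widetilde{\alpha}$ and $g\widetilde{\beta}$. The disjointness hypothesis $\dom(f\widetilde{\alpha}) \cap \dom(g\widetilde{\beta}) = \emptyset$ is exactly what the lemma requires, and it produces the pushout square
\begin{ctp}
\matrix[matrix of math nodes,column sep=.3cm,row sep=.3cm]{
    & |(R)| R \\
    |(S)| R \glue{f\widetilde{\alpha}} \underbar{S} & \text{po} & |(Sp)| R \glue{g\widetilde{\beta}} \underbar{S}' \\
    & |(T)| R \glue{f\widetilde{\alpha} + g\widetilde{\beta}} (\underbar{S} + \underbar{S}') \\
};
\draw (R) edge[white,->] node[black,sloped,xscale=-1]{$\subseteq$} (S);
\draw (R) edge[white,->] node[black,sloped]{$\subseteq$} (Sp);
\draw (S) edge[white,->] node[black,sloped]{$\subseteq$} (T);
\draw (Sp) edge[white,->] node[black,sloped,xscale=-1]{$\subseteq$} (T);
\end{ctp}
whose pushout object is precisely the net announced in the corollary.

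The last step is to replace the two middle objects by the nets that actually occur in the statement. The renaming $\widetilde{\alpha}$ extends port by port to an isomorphism $R \glue{f\widetilde{\alpha}} \underbar{S} \xisoarrow{} \widetilde{\alpha}(R) \glue{f} \underbar{S} = S$, which agrees with $\alpha$ when precomposed with $R \subseteq R \glue{f\widetilde{\alpha}} \underbar{S}$. Likewise for $\widetilde{\beta}$ on the right branch. Applying the stability-under-iso lemma twice transports the pushout above to the square in the statement, with the inclusion maps on the bottom obtained by composing the inclusions into the gluings with the two isos.

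The main obstacle is not conceptual but bookkeeping: checking that the iso $R \glue{f\widetilde{\alpha}} \underbar{S} \xisoarrow{} \widetilde{\alpha}(R) \glue{f} \underbar{S}$ really exists and respects $\alpha$. This boils down to noting that gluing is natural with respect to renaming the carrier of one factor — formally, $\widetilde{\alpha} \permplus \id_{\underbar{S}}$ induces an isomorphism of the two gluings because $\Ex$-composition is defined purely in terms of the combinatorics of the involved permutations, so renaming ports on one side can be absorbed into the identification partial injection. Once this is verified, the stability lemma does all the remaining work and the corollary follows.
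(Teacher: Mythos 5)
Your proof is correct and takes essentially the same route as the paper, whose entire argument is the one-line remark that the corollary follows from ``the two previous lemmas and the definition of almost injective morphisms'' --- i.e.\ apply Lemma~\ref{pushout subnets} to the inclusions with identification maps $f\widetilde{\alpha}$ and $g\widetilde{\beta}$, then transport the two middle objects along the renamings via the stability-under-iso lemma. The bookkeeping you flag (that $\widetilde{\alpha} \permplus \id_{\underbar{S}}$ induces an isomorphism $R \glue{f\widetilde{\alpha}} \underbar{S} \xisoarrow{} S$ compatible with $\alpha$) is exactly the content the paper leaves implicit.
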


\iffalse
\begin{proof}
We have $S \xisoarrow{} R \glue{f\widetilde{\alpha}} \underbar{S}$ and
$S' \xisoarrow{} R \glue{g\widetilde{\beta}} \underbar{S}'$, which defines
the two almost injective morphisms in the diagram.
The fact that it is a pushout is straightforward.
\end{proof}
\fi

\begin{rem}
    The disjointness of $S$ and $S'$ in the previous lemma is not
    mandatory as pushouts are only defined up to isomorphism.
\end{rem}

\begin{lem}[Complement]
    \label{lemme:contexte}
    If we have
    \begin{itp}
        \matrix[matrix of math nodes,column sep=.3cm,row sep=.3cm]{
        & |(R)| R & \\
        |(S)| S &  \\
        & |(T)| T \\
        };
        \draw (R) edge[left hook->] node[above]{$\alpha$} (S);
        \draw (S) edge[left hook->] node[below]{$\beta$} (T);
    \end{itp}
    then there exists $S'$ and
    $R \xhookrightarrow{\alpha'} S'$ such that
    \begin{itp}
        \matrix[matrix of math nodes,column sep=.3cm,row sep=.3cm]{
        & |(R)| R \\
        |(S)| S &  \text{po} & |(Sp)| S' \\
        & |(T)| T \\
        };
        \draw (R) edge[left hook->] node[above]{$\alpha$} (S);
        \draw (S) edge[left hook->] node[below]{$\beta$} (T);
        \draw (R) edge[left hook->] node[above right]{$\alpha'$} (Sp);
        \draw (Sp) edge[white,->] node[black,sloped,xscale=-1]{$\subseteq$} (T);
    \end{itp}
\end{lem}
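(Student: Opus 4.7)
My plan is to construct $S'$ explicitly from a double decomposition of $T$, then recognise the resulting square as an instance of lemma~\ref{pushout subnets}. First I would unpack almost injectivity of $\alpha$ and $\beta$ to write $S = \widetilde{\alpha}(R) \glue{f} \underline{S}$ and $T = \widetilde{\beta}(S) \glue{g} \underline{T}$, and substitute the first into the second. Because $\widetilde{\beta}$ is a renaming it commutes with the inner gluing, producing
\[
T = \bigl(\widetilde{\beta}(\widetilde{\alpha}(R)) \glue{\widetilde{\beta}(f)} \widetilde{\beta}(\underline{S})\bigr) \glue{g} \underline{T}.
\]

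Next I would split $g = g_1 + g_2$ according to whether each edge of $g$ has its domain in the free ports contributed by $\widetilde{\beta}(\widetilde{\alpha}(R))$ or by $\widetilde{\beta}(\underline{S})$, and apply proposition~\ref{prop:assoc_comp_reseau} twice (equivalently, corollary~\ref{cor:assocgluing}) to regroup the nested gluings as
\[
T = \bigl(\widetilde{\beta}(\widetilde{\alpha}(R)) \glue{g_1} \underline{T}\bigr) \glue{\widetilde{\beta}(f) + g_2} \widetilde{\beta}(\underline{S}).
\]
This suggests taking $S' = \widetilde{\beta}(\widetilde{\alpha}(R)) \glue{g_1} \underline{T}$. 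The rewriting directly exhibits $S' \subseteq T$ through the cutting $(S', \widetilde{\beta}(f)+g_2, \widetilde{\beta}(\underline{S}))$, and composing the renaming $\widetilde{\beta}\circ\widetilde{\alpha}$ with the inclusion of its image into $S'$ yields the required almost injective $\alpha': R \hookrightarrow S'$.

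It remains to verify the pushout property. Here I would first use the previous stability-under-iso lemma to replace $\alpha$ (resp.\ $\alpha'$) by the honest subnet inclusion of $\widetilde{\beta}(\widetilde{\alpha}(R))$ into $S$ (resp.\ into $S'$); lemma~\ref{pushout subnets} then applies directly, presenting $S$ and $S'$ as gluings of the common copy $\widetilde{\beta}(\widetilde{\alpha}(R))$ with the disjoint complements $\widetilde{\beta}(\underline{S})$ and $\underline{T}$, and identifies their pushout with precisely the regrouping of $T$ above. The chief technical point is arranging the disjointness hypotheses of lemma~\ref{pushout subnets}: disjoint complement carriers can be assumed by remark~\ref{rmq:disjoint carriers}, and the domain disjointness $\dom(g_1)\cap\dom(\widetilde{\beta}(f))=\emptyset$ is automatic from how the split $g = g_1 + g_2$ was defined.
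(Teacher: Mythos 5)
Your proof takes essentially the same route as the paper's: unpack the almost injective morphisms into gluing decompositions, substitute and regroup the resulting triple gluing by associativity (corollary~\ref{cor:assocgluing}), take $S'$ to be the subnet spanned by the image of $R$ together with the complement of $S$ in $T$, and conclude with lemma~\ref{pushout subnets} after using the iso-stability lemma to reduce to inclusion maps. Your explicit split $g = g_1 + g_2$ and the disjointness check are precisely the bookkeeping the paper leaves implicit in its choice of $f_1$ and $f_2$, so the two arguments coincide.
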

\begin{proof}
    First, we show that we only need to prove the result when all arrows
    are inclusion maps. Indeed, by applying the definition of almost injectivity
    we get the following commutative diagram:
    \begin{ctp}
        \matrix[matrix of math nodes,column sep=.3cm,row sep=.5cm]{
        & & &|(fR)| \bij{\alpha}(R) & & & |(gfR)| \bij{\beta \alpha}(R) \\
        |(R)| R & & & |(S)| S && |(gS)| \bij{\beta}(S) \\
        & & & &  & &  |(T)| T \\
        };
        \draw (R) edge[left hook->] node[below]{$\alpha$} (S);
        \draw (S) edge[bend right,left hook->] node[below]{$\beta$} (T);
        \draw (R) edge[->] node[above]{$\bij{\alpha}$} 
                           node[below,sloped]{$\sim$} (fR);
        \draw (fR) edge[->] node[above]{$\bij{\beta}$} 
                            node[below,sloped]{$\sim$} (gfR);
        \draw (S) edge[->] node[above]{$\bij{\beta}$} 
                           node[below,sloped]{$\sim$} (gS);
        \draw (gfR) edge[white,->] node[black,sloped,xscale=-1]
            {$\subseteq$} (gS);
        \draw (S) edge[white,->] 
            node[black,sloped,xscale=-1]{$\subseteq$} (fR);
        \draw (gS) edge[white,->] node[black,sloped]{$\subseteq$} (T);
    \end{ctp}
    If we could complete it with a pushout on the right, as in
    \begin{ctp}
        \matrix[matrix of math nodes,column sep=.3cm,row sep=.5cm]{
        & & &|(fR)| \bij{\alpha}(R) & & & |(gfR)| \bij{\beta \alpha}(R) \\
        |(R)| R & & & |(S)| S && |(gS)| \bij{\beta}(S) & \text{po} & 
        |(Sp)| S'\\
        & & & &  & &  |(T)| T \\
        };
        \draw (R) edge[left hook->] node[below]{$\alpha$} (S);
        \draw (S) edge[bend right,left hook->] node[below]{$\beta$} (T);
        \draw (R) edge[->] node[above]{$\bij{\alpha}$} 
                           node[below,sloped]{$\sim$} (fR);
        \draw (fR) edge[->] node[above]{$\bij{\beta}$} 
                            node[below,sloped]{$\sim$} (gfR);
        \draw (S) edge[->] node[above]{$\bij{\beta}$} 
                           node[below,sloped]{$\sim$} (gS);
        \draw (gfR) edge[white,->] node[black,sloped,xscale=-1]
            {$\subseteq$} (gS);
        \draw (S) edge[white,->] 
            node[black,sloped,xscale=-1]{$\subseteq$} (fR);
        \draw (gS) edge[white,->] node[black,sloped]{$\subseteq$} (T);
        \draw (gfR) edge[white,->] node[black,sloped]{$\subseteq$} (Sp);
        \draw (Sp) edge[white,->] node[black,sloped,xscale=-1]{$\subseteq$} (T);
    \end{ctp}
    we would get the main pushout.
  
    So, let us prove it in the case where
    $R \subseteq S \subseteq T$.
    By definition, we have $S = \underbar{R} \glue{f} R$
    and $T = S \glue{g} \underbar{S}$. Thus, we have
    $T = (\underbar{R} \glue{f} R) \glue{g} \underbar{S}$.
    By corollary~\ref{cor:associativity_gluing}, there exists
    $f_1$ and $f_2$ such that
    $T = \underbar{R} \glue{f_1} (R \glue{f_2+g} \underbar{S})$.
    We set $S' = R \glue{f_2+g} \underbar{S}$. 
    We can conclude by applying lemma~\ref{pushout subnets}. 
\end{proof}
\subsection{Generalized reduction}

\begin{defi}
    Let 
    $R_r \xhookleftarrow{\alpha_r} R_i \xhookrightarrow{\alpha_p} R_p$ be a diagram in $\IN$.
    By definition of almost injectivity we have
    $R_r = \widetilde{\alpha_r}(R_i) \glue{f_r} \underbar{R}_r$
    and $R_p = \widetilde{\alpha_p}(R_i) \glue{f_p} \underbar{R}_p$.
    
    We say that this diagram is a \emph{generalized rule} when
    $\dom(f_r \widetilde{\alpha_r}) = \dom(f_p \widetilde{\alpha_p})$.
\end{defi}

\begin{thm}
    If $R_r \xhookleftarrow{\alpha_r} R_i \xhookrightarrow{\alpha_p} S_p$ is
    a generalized rule and we have a morphism
    $R_r \xhookrightarrow{\beta} R$ 
    then we can do the following completion
    \begin{ctp}
        \matrix[matrix of math nodes,column sep=.3cm,row sep=.3cm]{
        |(Rr)| R_r &  & |(Ri)| R_i & & |(Rp)| R_p \\
        & \text{po} & & \text{po} & \\
        |(R)| R &  & |(S)| S & & |(T)| T \\
        };
        \draw (Ri) edge[left hook->] node[above]{$\alpha_r$} (Rr);
        \draw (Ri) edge[left hook->] node[above]{$\alpha_p$} (Rp);

        \draw (S) edge[white] node[black,sloped,xscale=-1]{$\subseteq$} (R);
        \draw (Ri) edge[left hook->]  (S);
        \draw (Rr) edge[left hook->] node[left]{$\beta$} (R);
        \draw (Rp) edge[left hook->]  (T);

        \draw (S) edge[left hook->]  (T);
    \end{ctp}
    $T$ is called the \emph{reduct of $R$ by the generalized rule}.
\end{thm}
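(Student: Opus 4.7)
The plan is to apply the Complement Lemma (lemma~\ref{lemme:contexte}) to the left chain to build the net $S$ and the left pushout, then to construct $T$ by a mirror construction on the $R_p$ side, and finally to read off the right pushout from lemma~\ref{pushout subnets} applied to a suitable re-associated presentation of $T$.

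First, after reducing to the case where $\alpha_r, \alpha_p, \beta$ are inclusions via their renamings $\widetilde{\alpha_r}, \widetilde{\alpha_p}, \widetilde{\beta}$, I may write $R_r = \underbar{R}_i \glue{f_r} R_i$, $R = R_r \glue{\gamma} \underbar{S}$, and $R_p = R_i \glue{f_p} \underbar{R}_p$, and the generalized rule condition becomes $\dom(f_r) = \dom(f_p)$ as subsets of $P_f(R_i)$. Applying the Complement Lemma to $R_i \subseteq R_r \subseteq R$ produces the net $S$ together with the left pushout; unfolding that proof — which uses corollary~\ref{cor:assocgluing} to reassociate $R$ as $\underbar{R}_i \glue{h} (R_i \glue{\gamma'} \underbar{S})$ with $\dom(\gamma') \subseteq P_f(R_i) \setminus \dom(f_r)$ — yields the explicit presentation $S = R_i \glue{\gamma'} \underbar{S}$.

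I then set $T := \underbar{R}_p \glue{f_p} S$, which is legitimate since $\underbar{R}_p$ may be chosen disjoint from the carrier of $S$ and $\dom(f_p) \subseteq P_f(R_i) \subseteq P_f(S)$. Corollary~\ref{cor:associativity_gluing} applied in one direction produces $T = R_p \glue{\gamma''} \underbar{S}$ for an appropriate $\gamma''$, witnessing the inclusion $R_p \subseteq T$ as the mirror of $R_r \subseteq R$; applied in the other direction it gives $T = R_i \glue{f_p + \gamma'} (\underbar{R}_p + \underbar{S})$. The sum $f_p + \gamma'$ is a valid disjoint sum of partial injections on $P_f(R_i)$ precisely because $\dom(f_p) = \dom(f_r)$ is disjoint from $\dom(\gamma') \subseteq P_f(R_i) \setminus \dom(f_r)$, and this is the single place where the generalized rule condition is indispensable. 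Lemma~\ref{pushout subnets} applied to this final presentation then delivers the right pushout. The main obstacle throughout is purely the bookkeeping of partial injections through the successive reassociations, and in particular verifying that the generalized rule condition translates cleanly into the disjointness hypothesis required by lemma~\ref{pushout subnets}; everything else reduces to routine applications of associativity of gluing.
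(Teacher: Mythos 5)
Your proof is correct and follows essentially the same route as the paper: the paper's own argument is precisely the chaining of the Complement Lemma (lemma~\ref{lemme:contexte}) for the left square with Corollary~\ref{theoreme:in sa} for the right square, using the generalized-rule condition $\dom(f_r\widetilde{\alpha_r})=\dom(f_p\widetilde{\alpha_p})$ to obtain the disjointness hypothesis. You merely unfold Corollary~\ref{theoreme:in sa} into its underlying Lemma~\ref{pushout subnets} and make the bookkeeping of the gluing functions explicit, which is a faithful elaboration rather than a different argument.
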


\begin{proof}
    The proof is just a chaining of the two lemmas~\ref{lemme:contexte}
    and~\ref{theoreme:in sa}.  The condition of equality of domain
    in the definition of generalized rule ensures that
    the domain of the gluing function in $R_i \xinjarrow{} S$, being
    disjoint from the domain of the gluing function in $R_i 
    \xinjarrow{} R_r$ is also disjoint from the gluing function
    in $R_i \xinjarrow{} R_p$. Thus, the lemma~\ref{theoreme:in sa}
    is applicable.
\end{proof}

\begin{prop}
    This reduction is a generalization of the one defined in section~\ref{sec:dynamics}.
\end{prop}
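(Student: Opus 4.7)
The plan is to translate any rule in the sense of Section~\ref{sec:dynamics} into a generalized rule, and then check that the two reduction relations agree on any net containing the redex.

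\textbf{Construction of the generalized rule.} Given $\mathcal{R} = (\context{R_r}{I_r}, \context{R_p}{I_p})$ with $|I_r| = |I_p| = N$, I take $R_i$ to be the pure wire net on $2N$ ports with $\sigma_w = (1\ 2)(3\ 4)\cdots(2N-1\ 2N)$ and no cells, equipped with the canonical interface $I_i = 1 < 3 < \cdots < 2N-1$. Define $\alpha_r : R_i \to R_r$ by sending $2k-1$ to the $k$-th port of $I_r$ and $2k$ to the other endpoint of the wire containing it in $R_r$; define $\alpha_p$ analogously from $I_p$. Both maps are almost injective: in the decomposition $R_r = \widetilde{\alpha_r}(R_i) \glue{f_r} \underbar{R}_r$, the renamed image $\widetilde{\alpha_r}(R_i)$ is the wire-only subnet of $R_r$ formed by the $N$ wires adjacent to $I_r$, while $\underbar{R}_r$ consists of the two cells connected by $(b\ c)$, and $f_r$ identifies each even port of $R_i$ with the corresponding auxiliary port of a cell. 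The decomposition for $R_p$ is identical in shape. In both cases $\dom(f_r \widetilde{\alpha_r}) = \dom(f_p \widetilde{\alpha_p}) = \{2,4,\dots,2N\}$, so the span $R_r \xhookleftarrow{\alpha_r} R_i \xhookrightarrow{\alpha_p} R_p$ is a generalized rule.

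\textbf{Matching the two reductions.} Suppose $S \xrightarrow{\mathcal{R}} S'$ in the old sense, so $S = \context{R_0}{I} \cglue \context{\alpha(R_r)}{\alpha(I_r)}$ and $S' = \context{R_0}{I} \cglue \context{\beta(R_p)}{\beta(I_p)}$ for renamings $\alpha, \beta$. The renaming $\alpha$ combined with the inclusion $\alpha(R_r) \subseteq S$ yields an almost injective morphism $\gamma : R_r \hookrightarrow S$ with $\widetilde{\gamma} = \alpha$. Apply Lemma~\ref{lemme:contexte} to $R_i \xhookrightarrow{\alpha_r} R_r \xhookrightarrow{\gamma} S$: unfolding the proof of that lemma and applying Corollary~\ref{cor:associativity_gluing} to the triple gluing $\underbar{R}_r \cup R_i \cup R_0$, the complement net $\widetilde S$ takes the form $R_i \glue{h} R_0$, where $h$ is precisely the gluing function determined by the chord $\cord{I_i}{I}$ transported through $\alpha_r$ and $\alpha$. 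The right pushout is then handled by Lemma~\ref{pushout subnets}, which substitutes $\underbar{R}_p$ back along the same attachment, yielding $T = (R_i \glue{h} R_0) \glue{f_p'} \underbar{R}_p$. A final application of Corollary~\ref{cor:associativity_gluing} regroups this as $\context{R_0}{I} \cglue \context{\beta'(R_p)}{\beta'(I_p)}$ for a suitable renaming $\beta'$, exhibiting $T$ as a reduct of $S$ by $\mathcal{R}$.

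\textbf{Main obstacle.} The delicate point is not existence but precise identification: one must check that the gluing function $h$ produced by Lemma~\ref{lemme:contexte} really is (up to renaming) the chord $\cord{I}{\alpha(I_r)}$ composed with the inverse of the wire-layer identification induced by $\alpha_r$, and similarly on the $R_p$ side. Both verifications amount to pushing partial injections through the associativity identities from Corollary~\ref{cor:associativity_gluing} and Proposition~\ref{cor:assoc_comp_exec}; no new construction is required, but the bookkeeping with $f_r$, $\widetilde{\alpha_r}$, $\gamma$, and the chords is where the real content of the proof lies. Finally, since DPO reducts are defined only up to isomorphism, the equality $T = S'$ holds up to a renaming, which is consistent with the non-determinism already built into the definition of $\xrightarrow{\mathcal{R}}$.
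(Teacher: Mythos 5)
Your construction is essentially the paper's own proof: the paper takes $R_i = ((d_1\ f_1)\dots(d_m\ f_m),0)$ built directly on the interface ports so that $\alpha_r$ is a literal inclusion (avoiding your explicit renaming of $1,\dots,2N$), checks $\dom(f_r)=\dom(f_p\widetilde{\alpha_p})=\{f_1,\dots,f_m\}$ exactly as you do, and then simply writes out $T = R_i \glue{g'\widetilde{\beta} + f_p\widetilde{\alpha_p}}(\underbar{R}+\underbar{R}_p)$ as the output of both constructions rather than elaborating the bookkeeping you describe in your last paragraph. The one slip to fix: $f_r$ cannot identify the even ports of $R_i$ with the auxiliary ports themselves, since those are cell ports and gluing functions must map free ports to free ports; it must target fresh free ports of $\underbar{R}_r$ that are wired to the auxiliary ports, the fusion happening through the $\Ex$-composition.
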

\begin{proof}
    Indeed let $(\context{R_r}{I_r},\context{R_p}{I_p})$ be an interaction rule and
    set $I_r = d_1 > \dots > d_m$. We define a net
    $$R_i = \left( (d_1\ f_1) \dots (d_m\ f_m), 0 \right)$$
    with $m$ new free ports $f_i$.

    We directly have an inclusion $R_i \subseteq R_r = R_i \glue{f_r} \underbar{R}_r$ 
    and by definition of
    an interaction rule, we have a bijection between $I_r$ and $I_p$ which can be
    lifted to an almost injective morphism $R_i \xinjarrow{\alpha_p} R_p 
    = \widetilde{\alpha_p}(R_i) \glue{f_p} \underbar{R}_p$. 

    The diagram $R_r \supseteq R_i \xinjarrow{\alpha_p} R_p$ is a generalized rule
    as $\dom(f_r) = \dom(f_p \widetilde{\alpha_p}) = \{ f_1, \dots, f_m \}$.

    Now let $R_r \xinjarrow{\beta} R$ be an almost injective morphism, we have
    $R = \widetilde{\beta}(R_r) \glue{g} \underbar{R}$. We are going to 
    consider $\underbar{R}$ and $\underbar{R}_p$ disjoint, if it is not
    the case we just need to add an explicit renaming to the following
    computations. By construction,
    we get $S = \widetilde{\beta}(R_i) \glue{g'} \underbar{R}$, where $g'$ is
    the restriction of $g$ to $\widetilde{\beta}(R_i)$, and we have
    $T = R_i \glue{g' \widetilde{\beta} + f_p \widetilde{\alpha_p}}
    (\underbar{R} + \underbar{R}_p)$ which is the result of the previously defined
    reduction.
\end{proof}

\section{Implementation}
\subsection{Introduction}
We detail here part of our implementation in OCaml of an interaction net tool.
This implementation follows closely the mathematical definitions given
earlier. By doing so we hope that we make apparent the idea that this 
framework, even though involving mathematical objects, can be seen as 
a natural syntax for implementing interaction nets.

A self-contained net reducer has been extracted from our implementation and
is presented in Appendix~\ref{sec:code}. For the sake of briefness we
have removed from this code subroutines involving renaming of net.

\subsection{Data structures}
The easiest way to represent partial permutation is to
define them as their list of orbits. The fact that orbits
are disjoint and make sense will in fact be ensured by the
validity of our operations. 

We define two types
\begin{lstlisting}
type 'a lorbit = { cycle : int list; label : 'a } 
type 'a lperm = 'a lorbit list
\end{lstlisting}
for representing labelled permutation, and we
only need to set a dummy label to represent an unlabelled permutation.

Therefore, the type for representing a net is
\begin{lstlisting}
type cell_label = { symbol : symbol; pal : int }
type net = { cells : cell_label lperm; wires : unit lperm }
\end{lstlisting}

Following the previous definitions, we define interface, context
and rule
\begin{lstlisting}
type interface = int list 
type context =  net * interface 
type rule = { symbols : symbol * symbol; pattern : context }
\end{lstlisting}

\subsection{Algorithms}
To have a full implementation we need to be able to find when a
reduction rule could be applied, and then to apply it.
Nevertheless the only changing part between this framework and
the usual one is the use of $\Ex$-composition to define the reduction.

We recall here the standard procedure for reducing nets, next to each step we
give the corresponding functions in the code found in Appendix~\ref{sec:code}:
\begin{enumerate}[(1)]
\item Extract the list of active wires, \emph{i.e.} wires linking two principal ports\hfill\break [\verb+net_get_active_wires+]
\item Filter out the active wires corresponding to a rule redex\hfill\break [\verb+net_appliable_rules+]
\item For one of these matches, cut out the redex and replace it with the rule pattern\hfill\break [\verb+net_remove_cell+,\verb+net_remove_wire+,\verb+net_apply_rule+]
\end{enumerate}

\noindent The main difference here, is that our replacement of the pattern relies on 
a net gluing [\verb+net_glue+], which in turns relies on an $\Ex$-composition
[\verb+perm_excomp+].

\begin{algorithm}
    \caption{Computation of $\sigma_w \cexece{f} \tau_w$
    for $\sigma_w,\tau_w$ being $w$-permutations}
    \label{excomp compute}
\begin{algorithmic}
\STATE $orbits = \sigma_w$ + $\tau_w$
\FOR{$p \in \dom(f)$}
    \STATE $p' = f(p)$
    \STATE{ $w = $ orbit containing $p$ in $orbits$ }
    \STATE{ $w' = $ orbit containing $p'$ in $orbits$ }
    \STATE $orbits = orbits - [w,w']$
    \IF{$w = w'$}
        \STATE $orbits = [min(p,p')] :: orbits$
    \ELSE \STATE $(p,q) = w$
            \STATE $(p',q') = w'$
            \STATE $orbits = [q,q'] :: orbits$
    \ENDIF
\ENDFOR
\RETURN{$orbits$}
\end{algorithmic}
\end{algorithm}

A method for computing $\sigma_w \cexece{f} \tau_w$ can be found
in Algorithm~1. This algorithm amounts to concatenation
of orbits from $\sigma_w$ and $\tau_w$ by removing ports that are part of
$\dom(f) \cup \codom(f)$. If we consider that 
every operations used on permutations are linear, as it the case
with lists, its complexity is in $\mathcal{O}(|\dom(f)| 
(|\dom(\sigma_w)| + |\dom(\tau_w)|))$.  Note that in most cases
$|\dom(f)|$ is small compared to $|\dom(\sigma_w)| + |\dom(\tau_w)|$
because of the local aspect of reduction rules in interaction
nets.

\subsection{Extensions}
Our full interaction net tool\footnote{available in a preliminary version at the address
\url{http://marc.de-falco.fr/mlint}} deals with some common extensions of interaction nets.

To be able to handle sharing graphs, in the Abadi, Gonthier and Levy flavour~\cite{AbadiGonthierLevy92b} 
we need to add parameters to cells, the so-called \emph{levels}.
These parameters are both used to guard the applicability of a rule and add dependencies on the
redex parameters inside the rule pattern. Thus, we extend the previous types with
\begin{lstlisting}
type 'a cell_label = { symbol:symbol; pal:int; parameter:'a} 
type 'a rule = { symbols : symbol * symbol;
                 pattern : 'a * 'a -> 'a context;
                 guard : 'a * 'a -> bool }
\end{lstlisting}

Another common extension is found in differential interaction nets, presented in~\cite{EhrhardRegnier05b},
which handles not only
nets but formal sum of nets. Concerning the rules it amounts to multiple patterns, therefore
we only need to adapt the previous type of pattern to
\begin{lstlisting}
                 pattern : 'a * 'a -> 'a context list
\end{lstlisting}

We would like to emphasise on the fact that these extensions do not imply complex changes to the
code presented in Appendix~\ref{sec:code}. Indeed, our framework presented here for vanilla
interaction net is quite flexible and it could serve as a basis for a rigorous study of 
extensions of interaction nets.

\section*{Conclusion}
Throughout this paper we have developed a syntactical framework for dealing
with interaction nets while still being rigorous. Some specific extensions of
this framework -- for example the definition of paths in nets, their reduction
and its strong confluence -- can be found in~\cite{deFalcoThesis09}.

At this point, it is quite natural to ask about semantics. So far no general 
notion of denotational semantics for interaction nets can be found in the literature. The
closest examples are either based on geometry of interaction~\cite{Lafont97,deFalco08}
or experiments~\cite{Mazza07}, and all treat of specific cases (interaction combinators or
differential interaction nets). 
Building on this framework, the author has a proposal which will be presented in a further paper.

\section*{Acknowledgements}
The author would like to thank Laurent Regnier and the anonymous
referees of both versions of this paper for their insightful comments.

\bibliographystyle{alpha}
\bibliography{biblio}

\appendix
\section{A lightweight interaction net reducer in OCaml}
\label{sec:code}
\lstset{language=caml,
    basicstyle=\footnotesize
    }

\lstinputlisting{kernel.ml}
%\section*{Acknowledgement}
%The author would like to thank Laurent Regnier and Jean Goubault-Larrecq
%who suggested numerous improvemtn

\end{document}